\newif\ifincludeappendixx
\definecolor{my-full-blue}{HTML}{1F77B4}
\definecolor{my-full-orange}{HTML}{FF7F0E}
\definecolor{my-full-green}{HTML}{2CA02C}
\definecolor{my-full-red}{HTML}{d62728}
\definecolor{my-full-purple}{HTML}{9467bd}
\definecolor{my-full-brown}{HTML}{8c564b}
\definecolor{my-full-pink}{HTML}{e377c2}
\definecolor{my-full-gray}{HTML}{7f7f7f}
\definecolor{my-full-olive}{HTML}{bcbd22}
\definecolor{my-full-cyan}{HTML}{17becf}
\colorlet{my-blue}{my-full-blue!30}
\colorlet{my-orange}{my-full-orange!30}
\colorlet{my-green}{my-full-green!30}
\colorlet{my-red}{my-full-red!30}
\colorlet{my-purple}{my-full-purple!30}
\definecolor{ckeyword}{HTML}{7F0055}
\definecolor{ccomment}{HTML}{3F7F5F}
\definecolor{cstring}{HTML}{2A0099}
\lstdefinestyle{numbers}{
	numbers=left,
	%
	framexleftmargin=20pt,
	%
	numberstyle=\tiny,
	%
	firstnumber=auto,
	%
	numbersep=1em,
	%
	xleftmargin=2em
}
\lstdefinestyle{layout}{
	frame=none,
	%
	captionpos=b,
}
\lstdefinestyle{comment-style}{
	morecomment=[l]//,
	%
	morecomment=[s]{/*}{*/},
	%
	commentstyle={\color{ccomment}\itshape},
}
\lstdefinestyle{string-style}{
	%
	morestring=[b]",%
	%
	morestring=[b]',%
	%
	stringstyle={\color{cstring}},
	%
	showstringspaces=false,%
}
\lstdefinestyle{keyword-style}{
	%
	keywordstyle={\ttfamily\bfseries},
	%
	morekeywords={
		function,
		constructor,
		int,
		bool,
		return,
		returns,
		uint
	},
	%
	morekeywords = [2]{},
	keywordstyle = [2]{\text},
	%
	%
	sensitive=true,
}
\lstdefinestyle{input-encoding}{
	inputencoding=utf8,
	%
	%
	extendedchars=true,
	%
	%
	literate=
	{ℝ}{$\reals$}1%
	{→}{$\rightarrow$}1%
	{α}{$\alpha$}1%
	{β}{$\beta$}1%
	{λ}{$\lambda$}1%
	{θ}{$\theta$}1%
	{ϕ}{$\phi$}1%
}
\lstdefinestyle{escaping}{
	%
	moredelim={**[is][\color{blue}]{\%}{\%}},
	%
	%
	escapechar=|,
	%
	%
	mathescape=true
}
\lstdefinestyle{default-style}{
	%
	basicstyle=\fontencoding{T1}\ttfamily\footnotesize,
	style=numbers,
	style=layout,
	style=comment-style,
	style=string-style,
	style=keyword-style,
	style=input-encoding,
	style=escaping,
	%
	%
	%
	tabsize=2,
	%
	upquote=true
}
\lstdefinelanguage{BASIC}{
	language=C++,
	style=default-style
}[keywords,comments,strings]%
\newcommand\theHALG@line{\thealgorithm.\arabic{ALG@line}}
\newcommand{\secref}[1]{\S#1}
\newcommand{\crefrangeconjunction}{--}
\crefname{table}{Tab.}{Tabs.}
\crefname{listing}{Lst.}{listings}
\crefname{line}{Lin.}{Lin.}
\crefname{appendix}{App.}{App.}
\newcommand{\appref}[1]{%
	\ifbool{includeappendix}{\cref{#1}}{the appendix}%
}
\newcommand{\Appref}[1]{%
	\ifbool{includeappendix}{\cref{#1}}{The appendix}%
}
\tikzstyle{split} = [
\tikzset{split/.default=-0.5cm}
\let\marvosymLightning\Lightning
\renewcommand{\lightning}[0]{\text{\marvosymLightning}}
\newtheorem{theorem}{Theorem}[section]
\newcommand{\xyes}[0]{\ding{51}}
\newenvironment{talign}
 {\align}
 {\endalign}
\newenvironment{talign*}
 {\csname align*\endcsname}
 {\endalign}
\DeclareMathAlphabet\mymathcal{OMS}{cmsy}{m}{n}
\DeclareMathAlphabet\mathbfcal{OMS}{cmsy}{b}{n}
\DeclareFontFamily{U}{yswab}{}
\DeclareFontShape{U}{yswab}{m}{n}{<->yswab}{}
\newcommand{\textswab}[1]{\text{\usefont{U}{yswab}{m}{n}#1}}
\DeclareUrlCommand\UScore{\urlstyle{rm}}
\newcommand{\expUScore}{%
  \expandafter\expandafter\expandafter
  \UScore
  \expandafter\expandafter\expandafter
}
\tikzstyle{signstyle}=[line width=0.3ex, x=1ex, y=1ex]
\newcommand{\Plus}{%
	\mathop{%
		\begin{tikzpicture}
			\draw[signstyle] (0.5,0) -- (0.5,1); 
			\draw[signstyle] (0,0.5) -- (1,0.5); 
		\end{tikzpicture}
	}%
}
\newcommand{\Minus}{%
	\mathop{%
		\begin{tikzpicture}
			\draw[signstyle, draw opacity=0] (0.5,0) -- (0.5,1); 
			\draw[signstyle] (0,0.5) -- (1,0.5); 
		\end{tikzpicture}
	}%
}
\newcommand{\PlusMinus}{%
	\mathop{%
		\begin{tikzpicture}
			\draw[signstyle] (0.5,0) -- (0.5,1); 
			\draw[signstyle] (0,0.5) -- (1,0.5); 
			\draw[signstyle] (0,0) -- (1,0); 
		\end{tikzpicture}
	}%
}
\newcommand{\bottom}[0]{{\perp}}
\let\oldotimes\otimes
\renewcommand{\otimes}{\mathop{\oldotimes}}
\newcommand{\af}[1]{#1^\#} 
\newcommand{\ad}[0]{\mathbfcal{X}} 
\newcommand{\abe}[1]{\boldsymbol{#1}} 
\newcommand{\cd}[0]{\mymathcal{X}} 
\newcommand{\I}[1][]{\ensuremath{\mathbb{I}_{#1}}}
\newcommand{\applyat}[2][]{{#2_{\ifthenelse{\isempty{#1}}{}{(#1)}}}}
\newcommand{\pauliformat}[2]{\ensuremath{\applyat[#2]{\mathit{#1}}}}
\newcommand{\X}[1][]{\pauliformat{X}{#1}}
\newcommand{\Y}[1][]{\pauliformat{Y}{#1}}
\newcommand{\Z}[1][]{\pauliformat{Z}{#1}}
\newcommand{\Ie}{
	\begin{psmallmatrix}
		0 \\ 0
	\end{psmallmatrix}
}
\newcommand{\Xe}{
	\begin{psmallmatrix}
		1 \\ 0
	\end{psmallmatrix}
}
\newcommand{\Ye}{
	\begin{psmallmatrix}
		1 \\ 1
	\end{psmallmatrix}
}
\newcommand{\Ze}{
	\begin{psmallmatrix}
		0 \\ 1
	\end{psmallmatrix}
}
\newcommand{\Imatrix}{
	\begin{psmallmatrix}
		1 & 0 \\
		0 & 1
	\end{psmallmatrix}
}
\newcommand{\Xmatrix}{
	\begin{psmallmatrix}
		0 & 1 \\
		1 & 0
	\end{psmallmatrix}
}
\newcommand{\Ymatrix}{
	\begin{psmallmatrix}
		0 & -\i \\
		\i & \phantom{-}0
	\end{psmallmatrix}
}
\newcommand{\Zmatrix}{
	\begin{psmallmatrix}
		1 & \phantom{-}0 \\
		0 & -1
	\end{psmallmatrix}
}
\newcommand{\pref}{\textswab{f}} 
\newcommand{\bare}{\textswab{b}}
\newcommand{\grppref}{\textswab{F}}
\newcommand{\encpauli}[0]{\textswab{g}}
\newcommand{\commut}[2]{#1 \diamond #2}
\newcommand{\ttrace}[1]{\trace \left( #1 \right)}
\newcommand{\join}[0]{\mathbin{\sqcup}}
\newcommand\equalhat{\mathrel{\stackon[1.5pt]{=}{\stretchto{%
    \scalerel*[\widthof{=}]{\wedge}{\rule{1ex}{3ex}}}{0.5ex}}}}
\newcommand{\unitary}[1][]{\mathcal{U}\ifthenelse{\isempty{#1}}{}{(#1)}}
\newcommand{\pauli}[0]{P} 
\newcommand{\pauligroup}[1]{\mathcal{P}_{#1}}
\newcommand{\reals}[0]{\mathbb{R}}
\newcommand{\complex}[0]{\mathbb{C}}
\newcommand{\bools}[0]{\mathbb{B}}
\newcommand{\ccontour}[1]{%
	\colorlet{contour-saved}{.}%
	\contour{contour-saved}{#1}%
}
\newcommand{\mathbbbcontour}[1]{
	\ccontour{\ensuremath{\mathbbb{#1}}}
}
\newcommand{\abools}{\mathbbbcontour{B}} 
\newcommand{\acomplex}{\mathbbbcontour{C}} 
\newcommand{\areals}{\mathbbbcontour{R}} 
\newcommand{\apaulis}[1][n]{\mathbbbcontour{P}_{#1}}
\newcommand{\azz}{\mathbbbcontour{Z}_4}
\newcommand{\adensities}{\mathbbbcontour{D}}
\renewcommand{\i}[0]{{\rm i}}
\newcommand{\inset}[1]{{\{{#1}\}}} 
\newcommand{\scalmult}{ }
\newcommand{\backgroundcolor}{gray!12}
\newcommand{\xMapsto}[2][]{\ext@arrow 0599{\Mapstofill@}{#1}{#2}}
\def\Mapstofill@{\arrowfill@{\Mapstochar\Relbar}\Relbar\Rightarrow}
\newcommand{\para}[1]{\paragraph{#1.}}
\newcommand{\tool}[0]{\textsc{Abstraqt}\xspace}
\newcommand{\baseline}[0]{YP21\xspace}
\newcommand{\twofrac}[0]{{\color{gray}\tfrac{1}{2}}}
\begin{document}

\title{Abstraqt: Analysis of Quantum Circuits via Abstract Stabilizer Simulation}

\author{Benjamin Bichsel}
\affiliation{ETH Zurich, Switzerland}
\author{Anouk Paradis}
\affiliation{ETH Zurich, Switzerland}
\author{Maximilian Baader}
\affiliation{ETH Zurich, Switzerland}
\author{Martin Vechev}
\affiliation{ETH Zurich, Switzerland}
\maketitle


\begin{abstract}
%
%
Stabilizer simulation can efficiently simulate an important class of quantum
circuits consisting exclusively of Clifford gates. However, all existing
extensions of this simulation to arbitrary quantum circuits including
non-Clifford gates suffer from an exponential runtime.

To address this challenge,  we present a novel approach for
efficient stabilizer simulation on arbitrary quantum circuits, at the cost of
lost precision.
Our key idea is to compress an exponential sum representation of the quantum
state into a single \emph{abstract} summand covering (at least) all occurring
summands.
This allows us to introduce an \emph{abstract stabilizer simulator} that
efficiently manipulates abstract summands by \emph{over-approximating} the
effect of circuit operations including Clifford gates, non-Clifford gates, and
(internal) measurements.

We implemented our abstract simulator in a tool called \tool and experimentally
demonstrate that \tool can establish circuit properties intractable for
existing techniques.

\end{abstract}

\section{Introduction} \label{sec:introduction}
Stabilizer simulation~\cite{gottesmanHeisenberg1998} is a promising technique
for efficient classical simulation of quantum circuits consisting exclusively of
\emph{Clifford} gates.
Unfortunately, generalizing stabilizer simulation to arbitrary circuits
including non-Clifford gates requires exponential
time~\cite{aaronson_improved_2004,gottesmantypes,kissinger_simulating_2022,bravyi_simulation_2019,pashayan_fast_2022,kissinger_classical_2022}.
Specifically, the first such generalization by Aaronson and
Gottesman~\cite[\secref{VII-C}]{aaronson_improved_2004} tracks the quantum state
$\rho$ at any point in the quantum circuit as a sum whose number of summands $m$
grows exponentially with the number of non-Clifford gates:
\begin{align} \label{eq:intro-rho}
	\rho
	=
	\sum\limits_{i=1}^m c_i P_i
		\prod\limits_{j=1}^n
		\tfrac{\I + (-1)^{b_{ij}} Q_j}{2}.
\end{align}
Here, while $c_i$, $P_i$, $b_{ij}$, and $Q_j$ can be represented efficiently
(see \cref{sec:background}), the overall representation is inefficient due to
exponentially large $m$.

\para{Abstraction}
The key idea of \tool is to avoid tracking the exact state $\rho$ of a quantum
system and instead only track key aspects of $\rho$.
To this end, we rely on the established framework of abstract
interpretation~\cite{cousotAbstract1977,cousot1992abstract}, which is
traditionally used to analyze classical
programs~\cite{blanchet_static_2003,logozzo_pentagons_2010} or neural
networks~\cite{gehr_ai2_2018} by describing sets of possible states without
explicitly enumerating all of them.
Here, we use abstract interpretation to describe the set of quantum states that
could occur at a specific point during execution of a circuit, by
\emph{over-approximating} the summands that could occur in any of those quantum
states $\rho$.

\para{Merging Summands}
This allows us to curb the exponential blow-up of stabilizer simulation by
merging multiple summands in \cref{eq:intro-rho} into an abstract single summand which
over-approximates all summands, at the cost of lost precision.
The key technical challenge addressed by our work is designing a suitable
\emph{abstract domain} to describe sets of summands, accompanied by the
corresponding \emph{abstract transformers} to over-approximate the actions
performed by the original exponential stabilizer simulation on individual
summands.

As a result, our approach is both efficient and exact on Clifford circuits, as
these circuits never require merging summands. On non-Clifford circuits, merging
summands trades precision for efficiency.
Moreover, our approach naturally allows us to merge the possible outcomes of a
measurement into a single abstract state, preventing an exponential path
explosion when simulating multiple internal measurements.

\para{Main Contributions}
Our main contributions are:
\begin{itemize}
	\item An abstract domain~(\cref{sec:abstractElements}) to over-approximate a
	quantum state represented by \cref{eq:intro-rho}.
	\item Abstract transformers~(\cref{sec:abstract-transformers}) to simulate
	quantum circuits, including gate applications and measurements.
	\item An efficient implementation\footnote{Our implementation is available
	at \url{https://github.com/eth-sri/abstraqt}.} of our approach in a tool
	called \tool~(\cref{sec:implementation}), together with an evaluation
	showing that \tool can establish circuit properties that are intractable for
	existing tools~(\cref{sec:eval}).
\end{itemize}

\para{Results}
Overall, we find that \tool is useful in scenarios where a full simulation of a given circuit
is intractable, but establishing specific properties of the considered circuit
is desirable.

For example, in our evaluation~(\cref{sec:eval}), we demonstrate that
\tool can establish that a circuit ultimately restores some qubits to state
$\ket{0}$.
As precisely simulating the entire circuit is intractable, \tool is typically
the only existing tool able to establish this fact on $12$ benchmarked circuits.
In contrast, existing tools typically yield incorrect results, throw errors, run
out of memory, time out, or are too imprecise to establish the resulting state
is $\ket{0}$.

\para{Outlook}
\tool trades precision for efficiency by abstracting the stabilizer simulation
from \cref{eq:intro-rho}, therefore allowing to establish properties of quantum
circuit outputs when full simulation is intractable. Such results may be useful
for tasks like (i)~establishing that an internal circuit state allows for
specific optimizations, (ii)~debugging quantum computers by establishing
invariants that can be checked at runtime, and more generally (iii)~static
analysis of quantum circuits, or (iv)~verification of the correctness of quantum
circuits.

Further, as discussed in \cref{sec:limitations}, \tool abstracts the very first
stabilizer simulation generalized to non-Clifford gates by Aaronson and
Gottesman~\cite[\secref{VII-C}]{aaronson_improved_2004}. We believe that our
encouraging results pave the way to introduce analogous abstraction to various
follow-up works which improve upon this
simulation~\cite{kissinger_simulating_2022,bravyi_simulation_2019,pashayan_fast_2022,kissinger_classical_2022}.
As these more recent works scale better than
\cite[\secref{VII-C}]{aaronson_improved_2004}, we expect that a successful
application of abstract interpretation to them will yield even more favorable
trade-offs between precision and efficiency.

\section{Background} \label{sec:background}
We first introduce the necessary mathematical concepts.

\para{Basic Notation}
We use $\mathbb{Z}_n := \mathbb{Z} / (n \mathbb{Z})$, define $\bools :=
\mathbb{Z}_2$, and write $2^S$ for the power set of the set $S$.

We represent a pure $n$-qubit quantum state $\psi \in \complex^{2^n}$ as a
\emph{density matrix} $\rho \in \complex^{2^n \times 2^n}$, defined as $\rho
=\psi \psi^\dagger$, where $\psi^\dagger$ denotes the conjugate transpose of
$\psi$. For a mixed state, i.e., a distribution over pure states $\psi_i$ with
probability $p_i$, the corresponding density matrix is $\rho=\sum_{i} p_i \psi_i
\psi_i^\dagger$.
Because both $\psi$ and $\rho$ store exponentially many values, they cannot be represented explicitly for large $n$.
We denote the embedding of a $k$-qubit gate $U \in \unitary[2^k]$ as an
$n$-qubit gate by $\applyat[i]{U} := \I[2^i] \otimes U \otimes \I[2^{n-i-k}]$,
where $\I[l]$ denotes the $l \times l$ identity matrix.

\para{Stabilizer Simulation}
The key idea of stabilizer
simulation~\cite{gottesmanHeisenberg1998,aaronson_improved_2004} is representing
quantum states $\rho=\psi\psi^\dagger$ implicitly, by \emph{stabilizers} $Q$
which stabilize the state $\psi$, that is $Q \psi = \psi$.
As shown in~\cite{aaronson_improved_2004}, appropriately selecting $n$
stabilizers $Q_j$ then specifies a unique $n$-qubit state $\rho = \prod_{j=1}^n
\tfrac{\I + Q_j}{2}$. 

In stabilizer simulation, all $Q_j$ are \emph{Pauli elements} from
$\pauligroup{n}$ of the form $\i^v \cdot \pauli^{(0)} \otimes
\cdots \otimes \pauli^{(n-1)}$, where $\pauli^{(j)} \in \{X, Y, Z, \I[2]\}$ and v $\in \mathbb{Z}_4$.
This directly implies that all stabilizers $Q_i$ for the same state $\psi$
commute, that is $Q_i Q_j = Q_j Q_i$, as elements from the Pauli group
$\pauligroup{n}$ either commute or anti-commute.
These elements can be represented efficiently in memory by storing $v$ and
$\pauli^{(0)}, \dots, \pauli^{(n-1)}$. 
In \cref{app:stabilizers}, we list states stabilized by Pauli matrices
(\cref{tab:stabilizers}) and the results of multiplying Pauli matrices
(\cref{tab:multiply-pauli}).
Further, in this work we use the functions \emph{bare} $\bare \colon
\pauligroup{n} \to \pauligroup{n}$ and \emph{prefactor} $\pref \colon
\pauligroup{n} \to \mathbb{Z}_4$ which extract the Pauli matrices without the
prefactor and the prefactor, respectively: 
\begin{align}
	\pref(\i^v P^{(0)} \otimes \cdots \otimes P^{(n-1)}) &= v, \label{eq:prefactor} \\
	\bare(\i^v P^{(0)} \otimes \cdots \otimes P^{(n-1)}) &= P^{(0)} \otimes \cdots \otimes P^{(n-1)}. \label{eq:barepauli}
\end{align}

Applying gate $U$ to state $\rho$ can be
reduced to conjugating the stabilizers $Q_j$ with $U$: 
\begin{talign}
	U \rho U^\dagger
	=
	U \Big( \prod\limits_{j=1}^n \tfrac{\I +
	{Q_j}}{2} \Big) U^\dagger
	\overset{\text{\cite[Sec.~10.5]{nielsen_quantum_2010}}}{=}
	\prod\limits_{j=1}^n \tfrac{\I + {U Q_j U^\dagger}}{2}.
	\label{eq:concrete-efficient-conjugation}
\end{talign}
While \cref{eq:concrete-efficient-conjugation} holds for any gate $U$, stabilizer
simulation can only exploit it if $UQ_jU^\dagger \in \pauligroup{n}$.
\emph{Clifford gates} such as $S$, $H$, $CNOT$, $\I$, $\X$, $\Y$,
and $\Z$ satisfy this for any $Q_j \in \pauligroup{n}$.

To also support the application of non-Clifford gates such as $T$ gates, we
follow \cite[\secref{VII.C}]{aaronson_improved_2004} and represent $\rho$ more
generally as
\begin{talign*}
\rho = \sum\limits_{i=1}^m
c_i
{\pauli_i}
\prod\limits_{j=1}^n \tfrac{\I + (-1)^{b_{ij}} {Q_j}}{2},
\end{talign*}
for $c_i \in \complex$, ${\pauli_i} \in \pauligroup{n}$,
$b_{ij} \in \bools$, and ${Q_j} \in \pauligroup{n}$.
Here, applying $U$ to $\rho$ amounts to replacing $P_i$ by
$U P_i U^\dagger$ and $Q_j$ by
$U Q_j U^\dagger$, which we can exploit if both
$U P_i U^\dagger$ and $U Q_j U^\dagger$
lie in $\pauligroup{n}$.

Otherwise, we decompose\footnote{This decomposition always exists and is unique,
as bare Pauli elements span (more than) $\unitary[2^n]$.} $U$ to the sum
$\sum_{p=1}^{K} d_p R_p$, where ${d_p \in \complex}$ and $R_p \in
\bare(\pauligroup{n})$ are bare Pauli elements, which have a prefactor of
$\i^0=1$.
Then,
\begin{talign}
	U \rho U^\dagger =& 
	\Big( 
		\sum\limits_{p=1}^K d_p R_p 
	\Big)
	\Big(
		\sum\limits_{i=1}^m
			{c_i}
			{\pauli_i}
			\prod\limits_{j=1}^n
			\tfrac{\I + (-1)^{b_{ij}} {Q_j}}{2}
	\Big)
	\Big(
		\sum\limits_{q=1}^K d_{q} R_{q} 
	\Big)^\dagger \\
	\overset{\text{\cite[\secref{VII.C}]{aaronson_improved_2004}}}{=}& 
	\sum\limits_{p=1}^K \sum\limits_{i=1}^m \sum\limits_{q=1}^K
		{c_{piq}}
		{P_{piq}}
			\prod\limits_{j=1}^n
			\tfrac{\I + (-1)^{b_{ijq}} {Q_j}}{2},
	\label{eq:concrete_app_U}
\end{talign}
for ${c_{piq}} = d_p c_i d_{q}^* \in \complex$, ${P_{piq}} = R_p P_i R_{q} \in
\pauligroup{n}$, and ${{b_{ijq}} = b_{ij} + \commut{Q_j}{R_{q}}} \in \bools$.
Here, $d_q^*$ denotes the complex conjugate of $d_q$, $+$ denotes addition
modulo $2$, and $\commut{Q_j}{R_{q}}$ is the commutator defined as~$0$ if $Q_j$
and $R_q$ commute and $1$ otherwise.
Note that $\commut{\cdot}{\cdot} \colon \pauligroup{n} \times \pauligroup{n} \to
\bools$ has the highest precedence.

Overall, the decomposition of a $k$-qubit non-Clifford gate results in at most
$K=4^k$ summands, thus blowing up the number of summands in our representation
by at most $4^k \cdot 4^k=16^k$.
In practice, the blow-up is typically smaller, e.g., decomposing a $T$ gate only
requires $2$ summands, while decomposing a $CCNOT$ gate requires $8$ summands.

\para{Measurement}
Measuring in bare Pauli basis ${P \in \bare(\pauligroup{n})}$ yields one of two
possible quantum states.
They can be computed by applying the two \emph{projections} $P_+ := \tfrac{\I +
P}{2}$ and $P_{-} = \tfrac{\I - P}{2}$, resulting in states $\rho_{+} = P_{+}
\rho P_{+}$ and $\rho_{-} = P_{-} \rho P_{-}$, respectively. For example,
collapsing the $i^\text{th}$ qubit to $\ket{0}$ or $\ket{1}$ corresponds to
measuring in Pauli basis $\applyat[i]{Z}$. The probability of outcome $\rho_{+}$
is $\ttrace{\rho_{+}}$, and analogously for $\rho_{-}$.
Note that we avoid renormalization for simplicity.
We discuss in \cref{sec:abstract-transformers} how measurements are
performed in stabilizer simulation~\cite[Sec.~VII.C]{aaronson_improved_2004}.

\para{Abstract Interpretation}
Abstract interpretation~\cite{cousotAbstract1977} is a framework for formalizing
approximate but sound calculation.
An \emph{abstraction} consists of ordered sets $(2^\cd, \subseteq)$ and $(\ad,
\leq)$, where $\cd$ and $\ad$ are called \emph{concrete set} and \emph{abstract
set} respectively together with a \emph{concretization function} $\gamma \colon
\ad \to 2^\cd$ which indicates which concrete elements $x=\gamma(\abe{x})
\subseteq \cd$ are represented by the abstract element $\abe{x}$.
Additionally, $\bottom \in \ad$ refers to $\emptyset = \gamma(\bottom) \subseteq
\cd$ and $\top \in \ad$ refers to $\cd = \gamma(\top)$.  

An abstract transformer $f^\sharp \colon \ad \to \ad$ of a function $f \colon
\cd \to \cd$ satisfies $\gamma \circ f^\sharp(\abe{x}) \supseteq f \circ
\gamma(\abe{x})$ for all $\abe{x} \in \ad$, where $f$ was lifted to operate on
subsets of $\cd$. 
This ensures that $f^\sharp$
(over-)approximates $f$, a property referred to as \emph{soundness} of
$f^\sharp$.
Abstract transformers can analogously be defined for functions $f \colon \cd^n
\to \cd$. Further, we introduce \emph{join} $\join \colon \ad \times \ad
\to \ad$, satisfying $\gamma(\abe{x}) \cup \gamma(\abe{y}) \subseteq \gamma(\abe{x} \join \abe{y})$.
Throughout this work, we distinguish abstract objects $\abe{x} \in \ad$ and
concrete objects $x \in \cd$ by stylizing them in bold or non-bold respectively.

As an example, a common abstraction is the interval abstraction with
$\cd=\mathbb{R}$. 
The abstract set is the set of intervals
$$
\ad = \{(l,u) \mid l, u \in \mathbb{R} \cup \{\pm\infty\} \},
$$
where $\abe{x} = (l,u)$ is a tuple.
The concretization function $\gamma \colon \ad \to \cd$ maps these tuples to
sets: $$\gamma(\abe{x}) = [l, u] = \{y \in \mathbb{R} \mid l \leq y \leq u\}.$$
Further, $\top = (-\infty,\infty)$ and $\bottom = (l,u)$
for $l>u$. 
Common abstract transformers for the interval abstraction are shown in \cref{tab:interval-transformers}.

\begin{table}
	\caption{Transformers for the interval abstraction.}
    \label{tab:interval-transformers}
	\small
    \centering
	\begin{tabular}{ @{} c l l @{} }
		\toprule
		Function & Abstract Transformer & Efficient Closed form \\
		\hline
		$+$ & $[l_1, u_1] +^\sharp [l_2, u_2] = [l', u']$ & $l' = l_1 + l_2$ and $u' = u_1 + u_2$ 
		\\
		$\cdot$ & $[l_1, u_1] \cdot^\sharp [l_2, u_2] = [l', u']$ & $l' = \min(l_1 l_2, l_1 u_2, u_1 l_2, u_1 u_2)$, $u'$ defined analogously with $\max$
		\\
		$\exp$ & $\exp^\sharp([l, u]) = [l', u']$ & $l' = \exp(l)$ and $u' = \exp(u)$ \\
		$\cos$ & $\cos^\sharp([l, u]) = [l', u']$ & exists, several case distinctions necessary \\
		$\cup$ & $[l_1, u_1] \join [l_2, u_2] = [l', u']$ & $l' = \min(l_1, l_2)$ and $u' = \max(u_1, u_2)$ \\
		\bottomrule
	\end{tabular}
\end{table}

The transformers in \cref{tab:interval-transformers} are \emph{precise}, meaning that for $f \colon \mathbb{R} \to \mathbb{R}$, we have that $f^\sharp((l, u)) = (\min_{l \leq v \leq u} f(v), \max_{l \leq v \leq u} f(v))$ and analogously for $f \colon \mathbb{R}^n \to \mathbb{R}$.
An abstract transformer for a composition of functions $f \circ g$ is the
composition of the abstract transformers. Although this is sound, it is not
necessarily precise: let $g \colon \mathbb{R} \to \mathbb{R}^2$ with $g(x) =
\begin{psmallmatrix} x \\ x \end{psmallmatrix}$ and $f \colon \mathbb{R}^2 \to
\mathbb{R}$ with $f(x,y)=x \cdot y$, then $f \circ g (x) = x^2$, but $f^\sharp
\circ g^\sharp((-2, 2)) = (-4, 4)$ whereas a precise transformer would map $(-2,
2)$ to $(0, 4)$. 

\para{Notational Convention}
In slight abuse of notation, throughout this work we may write the
concretization of abstract elements instead of the abstract element itself.
For example, for \mbox{$(0,1) \in \areals$}, we write $[0,1]$ defined as $\{v
\in \reals \mid 0 \leq v \leq 1 \}$ to indicate that it represents an interval. 
Where clear from context, we omit $\sharp$ and write $f$ for $f^\sharp$. For
example, we write $[l_1, u_1] + [l_2, u_2]$ for $[l_1, u_1] +^\sharp [l_2, u_2]$.
\section{Overview} \label{sec:overview}

In this section, we showcase \tool by applying it to the example circuit in
\cref{fig:overview}.
Overall, \tool proceeds analogously to
\cite[\secref{VII-C}]{aaronson_improved_2004}, but operates on abstract summands
representing many concrete summands.

\begin{figure*}
	\includegraphics[width=\linewidth]{./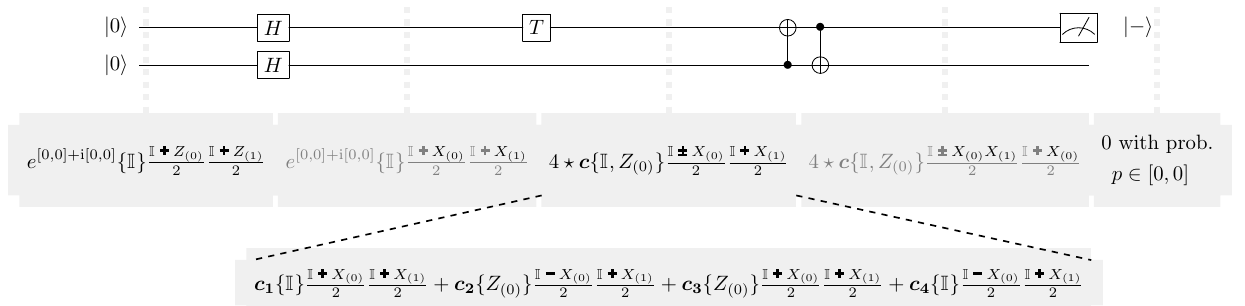}
	\caption{Overview of \tool, where we define $\abe{c}$ and
	$\abe{c_1}$--$\abe{c_4}$ in \cref{sec:overview}.}
\label{fig:overview}
\end{figure*}

\para{Example Circuit}
We first discuss the circuit in \cref{fig:overview}. Both qubits are
initialized to $\ket{0}$. The circuit then applies a succession of gates. The
abstract representation of the state after the application of each gate is shown
in the gray boxes below the circuit.
On the final state, the circuit collapses the upper qubit to $\ket{-}$ by
applying the projection $M_- = \tfrac{\I - \applyat[0]{X}}{2}$.
Precise circuit simulation shows that the probability of obtaining $\ket{-}$ is
$0$, in this case.
In the following, we demonstrate how \tool computes an over-approximation of
this probability. 

\para{Initial State}
The density matrix for the initial state $\ket{0} \otimes \ket{0}$ can be
represented as (see~\cite{aaronson_improved_2004}):
\begin{equation*}
	\rho_A 
	=  
	1 \scalmult \I  
	\tfrac{\I + (-1)^0 \applyat[0]{Z}}{2}
	\tfrac{\I + (-1)^0\applyat[1]{Z}}{2}.
\end{equation*}
To translate this to an abstract density matrix, we simply replace some elements
by abstract representations. This gives the following initial abstract state:
\begin{align} \label{eq:abstract-initial-full}
	\abe{\rho_A}
	= 
	e^{[0, 0] + [0, 0]\i} \scalmult 
	\inset{\I}
	\tfrac{\I + (-1)^{\inset{0}} \applyat[0]{Z}}{2}
	\tfrac{\I + (-1)^{\inset{0}} \applyat[1]{Z}}{2}.
\end{align}
Here we abstract booleans as sets, for instance $\{0\}$.
For conciseness, in \cref{fig:overview} we write $x \Plus y$, $x \Minus y$, and
$x \PlusMinus y$ for $x + (-1)^{\{0\}} y$, $x + (-1)^{\{1\}} y$, and $x +
(-1)^{\{0,1\}} y$. 
Further, we represent abstract complex numbers in polar form with logarithmic
length, using real intervals: $1$ is represented as $e^{[0, 0] + [0, 0]\i}$,
while we can over-approximate the set of complex numbers $\{1, \i\}$ as
$e^{[0, 0] + [0, \tfrac{\pi}{2}]\i}$. 
Finally, we abstract Pauli elements as sets, such as $\{\I\}$ in
\cref{fig:overview} and \cref{eq:abstract-initial-full}. In
\cref{sec:abstractElements}, we will clarify how we store these sets
efficiently, for example representing $\inset{\I}$ as $\i^\inset{0} \cdot \{\I\}
\otimes \{\I\}$ and $\inset{\i \cdot \I,\i \cdot \applyat[0]{\Z}}$ as
$\i^\inset{1} \cdot \{\I,\Z\} \otimes \{\I\}$.

We now explain how each operation in the circuit modifies this abstract state.

\para{Clifford Gate Application}
First, the circuit applies one Hadamard gate $H$ to each qubit. This corresponds
to the unitary operator $\applyat[0]{H}\applyat[1]{H}$, yielding updated
abstract density matrix $\abe{\rho_B} =
(\applyat[0]{H}\applyat[1]{H})\abe{\rho_A}
(\applyat[0]{H}\applyat[1]{H})^\dagger.$ Just as for concrete density matrices
(see \cref{sec:background}), this amounts to replacing
\begin{minipage}{\linewidth}
	\centering
	\def\arraystretch{1.4}
	\setlength{\tabcolsep}{2pt}
	\begin{tabular}{ccl}
		$\inset{\I}$ & by &
		$(\applyat[0]{H}\applyat[1]{H}) \inset{\I}
		(\applyat[0]{H}\applyat[1]{H})^\dagger = \inset{\I}$,
		\\
		$\applyat[0]{Z}$ & by &
		$(\applyat[0]{H}\applyat[1]{H}) \applyat[0]{Z}
		(\applyat[0]{H}\applyat[1]{H})^\dagger = \applyat[0]{X}$, and
		\\
		$\applyat[1]{Z}$ & by &
		$(\applyat[0]{H}\applyat[1]{H}) \applyat[1]{Z} (\applyat[0]{H}\applyat[1]{H})^\dagger = \applyat[1]{X}$.
	\end{tabular}
\end{minipage}
We hence get
$
	\abe{\rho_B}
	= 
	e^{[0,0] + [0,0]\i} \scalmult 
	\inset{\I}
	\tfrac{\I + (-1)^{\inset{0}} \applyat[0]{X}}{2}
	\tfrac{\I + (-1)^{\inset{0}} \applyat[1]{X}}{2}.
$

\para{Non Clifford Gate Application}
Next, the circuit applies gate $T$ on the upper qubit.
To this end, we again follow the simulation described in \cref{sec:background}.
We first decompose $T$ into Pauli elements:
$
	\applyat[0]{T} = d_1\I + d_2\applyat[0]{Z},
$
where $d_1 \approx e^{-0.1 + 0.4 \i}$ and ${d_2 \approx e^{-1.0 - 1.2 \i}}$.
Replacing $T$ with its decomposition, we can then write $\abe{\rho_T} =
T\abe{\rho_B}T^{\dagger}$, using \cref{eq:concrete_app_U}, as:
\begin{align*}
	\abe{\rho_T} 
	=
	&\left(d_1\I + d_2\applyat[0]{Z}\right) 
	\left(
		e^{[0,0] + [0,0]\i} \scalmult 
	\inset{\I}
	\tfrac{\I + (-1)^{\inset{0}} \applyat[0]{X}}{2}
	\tfrac{\I + (-1)^{\inset{0}} \applyat[1]{X}}{2}
	\right) \\
	&\quad\left(d_1 \I + d_2 \applyat[0]{Z}\right)^\dagger.
\end{align*}
Analogously to \cref{sec:background}, we can rewrite this to:
\begin{align*}
	&\abe{c_1} \scalmult  \inset{\I}  
		\tfrac{\I + (-1)^{\inset{0}}\applyat[0]{X}}{2}
		\tfrac{\I + (-1)^{\inset{0}}\applyat[1]{X}}{2} \\
	+& \abe{c_2} \scalmult  \inset{\applyat[0]{Z}}  
		\tfrac{\I + (-1)^{\inset{1}}\applyat[0]{X}}{2}
		\tfrac{\I + (-1)^{\inset{0}}\applyat[1]{X}}{2} \\
	+& \abe{c_3} \scalmult  \inset{\applyat[0]{Z}}    
	\tfrac{\I + (-1)^{\inset{0}}\applyat[0]{X}}{2}
	\tfrac{\I + (-1)^{\inset{0}}\applyat[1]{X}}{2} \\
	+& \abe{c_4} \scalmult  \inset{\I}   
	\tfrac{\I + (-1)^{\inset{1}}\applyat[0]{X}}{2}
	\tfrac{\I + (-1)^{\inset{0}}\applyat[1]{X}}{2},
\end{align*}
where
\begin{align*}
	&\abe{c_1} = d_1 e^{[0,0] + [0,0]\i} d_1^* \approx e^{[-0.2,-0.2] + [0,0]\i},\\
	&\abe{c_2} = d_1 e^{[0,0] + [0,0]\i} d_2^* \approx e^{[-1.1,-1.1] + [1.6,1.6]\i}, \\
	&\abe{c_3} = d_2 e^{[0,0] + [0,0]\i} d_1^* \approx e^{[-1.1,-1.1] + [-1.6,-1.6]\i}, \\
	&\abe{c_4} = d_2 e^{[0,0] + [0,0]\i} d_2^* \approx e^{[-2.0,-2.0] + [0,0]\i}.
\end{align*}	

\para{Merging Summands}
Unfortunately, simply applying $T$ gates as shown above may thus increase the number of summands in the abstract density matrix by a factor of $4$. 
To counteract this, our key idea is to merge summands, by allowing a single
abstract summand to represent multiple concrete ones, resulting in reduced
computation overhead at the cost of lost precision. Our abstract representation
allows for a straightforward merge: we take the union of sets and join
intervals. Specifically, for complex numbers, we join the intervals in their
representation, obtaining:
\begin{align*}
	\abe{c} := \abe{c_1} \join \abe{c_2} \join \abe{c_3} \join \abe{c_4} 
	&= e^{[-2.0, -0.2] + [-1.6, 1.6]\i}.
\end{align*}
Finally, we introduce the symbol $\star$ to denote how many concrete summands an
abstract summand represents. Altogether, merging the summands in $\abe{\rho_T}$
yields:
\begin{align*}
	\abe{\rho_C} &= 4 \star e^{[-2.0, -0.2] + [-1.6, 1.6]\i} \scalmult \inset{\I, \applyat[0]{Z}}  
	\tfrac{\I + (-1)^{\inset{0, 1}}\applyat[0]{X}}{2}
	\tfrac{\I + (-1)^{\inset{0}}\applyat[1]{X}}{2}.
\end{align*}
Note that for an abstract element $\abe{x}$, $r \star \abe{x}$ is not equivalent
to $r \cdot \abe{x}$. For example, $2 \star \{0,1\} = \{0,1\} + \{0,1\} =
\{0,1,2\}$, while $2 \cdot \{0,1\} = \{0,2\}$.~\footnote{We implicitly lift
concrete elements to abstract elements: $2 \cdot \{0,1\}=\{2\} \cdot
\{0,1\}=\{0,2\}$.}

\para{Measurement}
After the $T$ gate, the circuit applies two additional $CNOT$ gates, resulting in the updated density matrix:
\begin{equation*}
	\abe{\rho_D} 
	= 
	4 \star 
		e^{[-2.0, -0.2] + [-1.6, 1.6]\i} \scalmult  \inset{\I, \applyat[0]{Z}}  
		\tfrac{\I + (-1)^{\inset{0, 1}}\applyat[0]{X}\applyat[1]{X}}{2}
		\tfrac{\I + (-1)^{\inset{0}}\applyat[0]{X}}{2}.
\end{equation*}
Finally, the circuit applies the projection $M_{-} = \tfrac{\I
-\applyat[0]{X}}{2}$.
To update the density matrix accordingly, we closely follow
\cite{aaronson_improved_2004}, which showed that measurement can be reduced to
simple state updates through a case distinction on $M_{-}$ and the state $\rho$.
If (i) the measurement Pauli (here~$-\applyat[0]{X}$) commutes with the product
Paulis (here $(-1)^{\inset{0, 1}}\applyat[0]{X}\applyat[1]{X}$ and
$(-1)^{\inset{0}}\applyat[1]{X}$) and (ii) the measurement Pauli cannot be
written as a product of the product Paulis, the density matrix after measurement
is $0$. We will explain in \cref{sec:measurement} how our abstract domain allows both of these checks to
be performed efficiently.

Here, both conditions are satisfied, and we hence get the final state
$\abe{\rho_{M1}} = 0$. We can then compute the probability of such an outcome by
$p = \ttrace{\abe{\rho_{M1}}} = 0$.
Thus, our abstract representation was able to provide a fully precise result.

\para{Imprecise Measurement}
Suppose now that instead of the measurement in \cref{fig:overview}, we had
collapsed the lower qubit to $\ket{0}$ by applying projection $M_0 = \tfrac{\I
+ \applyat[1]{Z}}{2}$.

To derive the resulting state, we again follow \cite{aaronson_improved_2004}
closely. We note that the measurement Pauli $+\applyat[1]{Z}$ (i) anticommutes
with the first product Pauli $(-1)^{\inset{0, 1}}\applyat[0]{X}\applyat[1]{X}$
and commutes with the second one $(-1)^{\inset{0}}\applyat[0]{X}$ and (ii)
commutes with the initial Paulis $\{\I, \applyat[0]{Z}\}$. In this case, we get
that the density matrix is unchanged, thus $\abe{\rho_{M2}} = \abe{\rho_D}$. To
compute the trace of this matrix, we follow the procedure outlined in
\cref{sec:trace}. We omit intermediate steps here and get:~\footnote{We used the
precise interval bounds for $\abe{c}$ here, not the rounded values provided
earlier.}
$$p = \ttrace{\abe{\rho_{M2}}} = 4 \scalmult \Re(\abe{c}) \approx [0, 1.7].$$
Thus, our abstraction here is highly imprecise and does not yield any
information on the measurement result (we already knew that the probability must
lie in $[0,1]$).

\section{Abstract Domains} \label{sec:abstractElements}
In the following, we formalize all abstract domains
(\cref{tab:abstract-domains}) underlying our abstract representation of density
matrices~$\abe{\rho}$ along with key abstract transformers operating on them
(\cref{tab:abstractops}).
We note that all abstract transformers introduced here naturally also support
(partially) concrete arguments.

\para{Example Elements}
\cref{tab:abstract-domains} provides an example element $\abe{x}$ of each
abstract domain, along with an example of its concretization $\gamma(\abe{x})$,
where ${\gamma \colon \ad \to 2^\cd}$.
While \cref{tab:abstract-domains} correctly
distinguishes abstract elements from their concretization, in the following,
when describing operators we write concretizations instead of abstract elements
(as announced in \cref{sec:background}).

\para{Booleans and $\mathbb{Z}_4$}
Abstract booleans $\abe{b} \in \abools = 2^\bools$ are subsets of $\bools$, as
exemplified in \cref{tab:abstract-domains}.
The addition of two abstract booleans naturally lifts boolean addition to sets
and is clearly sound:
\begin{align} \label{eq:addition-set}
	\abe{b} + \abe{c} = \{b + c \mid b \in \abe{b}, c \in \abe{c}\}.
\end{align}
We define multiplication of abstract booleans analogously.
Further, we define the join of two abstract booleans as their set union.

Analogously to booleans, our abstract domain $\azz$ consists of subsets of
$\mathbb{Z}_4$, where addition, subtraction, multiplication, and joins works analogously to abstract booleans.
Further, we can straight-forwardly embed abstract booleans into $\azz$ by mapping $0$ to $0$ and $1$ to $1$. 

\begin{table}
	\caption{Example elements on abstract domains.}
	\label{tab:abstract-domains}
	\small
	\centering
	\begin{tabular}{@{}lll@{}l@{}}
		\toprule
		\textbf{Dom.} & \textbf{Example element} & \textbf{Concretization} \\ \midrule
		$\abools$ &
		$\{0,1\}$ &
		$\{0,1\}$ \\
		$\azz$ &
		$\{0,3\}$ &
		$\{0,3\}$ \\
		$\areals$ &
		$({\color{my-full-blue}0},{\color{my-full-orange}1})$ &
		$[{\color{my-full-blue}0},{\color{my-full-orange}1}]$ &
		$=\{r \mid {\color{my-full-blue}0} \leq r \leq {\color{my-full-orange}1} \}$ \\
		$\acomplex$
		& $({\color{my-full-blue}0},{\color{my-full-orange}1},{\color{my-full-green}\pi},{\color{my-full-red}2\pi})$
		& $e^{[{\color{my-full-blue}0},{\color{my-full-orange}1}]+[{\color{my-full-green}\pi},{\color{my-full-red}2\pi}]\i}$ &
		$= \{e^{r+\varphi\i} \mid {\color{my-full-blue}0} \leq r \leq {\color{my-full-orange}1}, {\color{my-full-green}\pi} \leq \varphi \leq {\color{my-full-red}2\pi} \}$ \\
		$\apaulis[2]$ &
		$(
			{\color{my-full-olive}\{0,3\}},
			{\color{my-full-purple}\{\Z,\Y\}},
			{\color{my-full-brown}\{\X\}}
		)$ &
		$
			\i^{{\color{my-full-olive}\{0,3\}}} \cdot
			{\color{my-full-purple}\{\Z,\Y\}} \otimes
			{\color{my-full-brown}{\{\X\}}}
			\,
		$ &
		$	
			= \left\{
				\i^b \cdot \pauli^{(1)} \otimes \pauli^{(2)} \;\middle|\;
				\begin{array}{l}
					b \in {\color{my-full-olive}\{0,3\}}, \\
					\pauli^{(1)} \in {\color{my-full-purple}\{\Z,\Y\}},
					\pauli^{(2)} \in {\color{my-full-brown}{\{\X\}}}
				\end{array}
			\right\}
		$
		\\ \bottomrule
	\end{tabular}
\end{table}

\begin{table}
    \caption{Summary of abstract transformers.}
    \label{tab:abstractops}
	\small
    \centering
    \begin{tabular}{@{}lll@{}}
		\toprule
		\textbf{Transformers} & \textbf{Domains} &
		\textbf{Definition}\\ \midrule
		$\abe{b} + \abe{c} \in \abools, \abe{b} \cdot \abe{c} \in \abools$ & $\abe{b}, \abe{c} \in \abools$ & 
		Lifting to sets, \cref{eq:addition-set} \\
		$\abe{b} \join \abe{c} \in \abools$ & $\abe{b}, \abe{c} \in \abools$ & $\abe{b} \cup \abe{c}$ \\
		$\abe{b} + \abe{c} \in \azz, \abe{b} - \abe{c} \in \azz, \abe{b} \cdot \abe{c} \in \azz$ & $\abe{b}, \abe{c} \in \azz$ & 
		Lifting to sets \\
		$\abe{b} \join \abe{c} \in \azz$ & $\abe{b}, \abe{c} \in \azz$ & 
		$\abe{b} \cup \abe{c}$ \\
		$\abe{b} \in \azz$ & $\abe{b} \in \abools$ & Embedding \\
		$\abe{c}	\cdot \abe{d} \in \acomplex$ & $\abe{c}, \abe{d} \in \acomplex$ & \cref{eq:complexmult} \\
		$\abe{c} \sqcup \abe{d} \in \acomplex$ & $\abe{c}, \abe{d} \in \acomplex$ & \cref{eq:complexjoin} \\
		$\Re(\abe{c}) \in \areals$ & $\abe{c} \in \acomplex^n$ & 
		\cref{eq:complexsum} \\
		$\i^{\abe{b}} \in \acomplex$ & $\abe{b} \in \abools$ & \cref{eq:i-exp} \\
		$\abe{PQ} \in \apaulis$ & $\abe{P, Q} \in \apaulis$ & \cref{eq:pauliprod} \\
		$\pref(\abe{P}\abe{Q}) \in \azz$ & $\abe{P}, \abe{Q} \in \apaulis$ & \cref{eq:prefactor-of-product}
		\\
		$\applyat[i]{U}\abe{P}\applyat[i]{U^\dagger} \in \apaulis$ & $U \in \unitary[2^k], \abe{P} \in \apaulis$ & \cref{eq:pauliconj} \\
		$\commut{\abe{P}}{\abe{Q}} \in \abools$ & $\abe{P}, \abe{Q} \in \apaulis$ & \cref{eq:paulicomm} \\
		$\abe{P} \sqcup \abe{Q} \in \apaulis$ & $\abe{P, Q} \in \apaulis$ & 
		\cref{eq:paulijoin} \\
		$(-1)^{\abe{b}} \cdot \abe{P}$ & $\abe{b} \in \abools, \abe{P} \in \apaulis$ & \cref{eq:pauliminusone} \\
		\bottomrule
    \end{tabular}
\end{table}

\para{Real Numbers}
We abstract real numbers by intervals of the form $[\underline{a},\overline{a}]
\subseteq \mathbb{R}\cup \{\pm\infty\}$, and denote the set of such intervals by
$\areals$. Here, $\underline{a}$ and $\overline{a}$ indicate the lower and upper
bounds of the interval, respectively.
Interval addition, interval multiplication, and the cosine and exponential
transformer on intervals are defined in their standard way, see
\cref{sec:background}.

\para{Complex Numbers}
We parametrize complex numbers $c \in \complex$ in polar coordinates (with
magnitude in $\log$-space), as $c = e^{r + \varphi \i}$ for $r, \varphi \in
\reals$. For example, we parametrize $0$ as $e^{-\infty + 0\i}$.

Based on this parametrization, we abstract complex numbers using two real
intervals for $r$ and $\varphi$ respectively, as exemplified in
\cref{tab:abstract-domains}. Formally, we interpret $\abe{c} \in \acomplex$ as
the set of all possible outcomes when instantiating both intervals:
\begin{align*}
	\gamma(\abe{c}) = e^{
		[\underline{r}, \overline{r}] +
		[\underline{\varphi}, \overline{\varphi}] \i
	} =
	\left\{
		e^{r + \varphi \i}
		\;\middle|\;
		r \in [\underline{r}, \overline{r}],
		\varphi \in [\underline{\varphi}, \overline{\varphi}]
	\right\}.
\end{align*}

We can compute the multiplication and join of two abstract complex numbers
$
\abe{c} = e^{
	[\underline{r},\overline{r}] +
	[\underline{\varphi}, \overline{\varphi}] \i
}
$
and
$
\abe{c'} = e^{
	[\underline{r}',\overline{r}'] +
	[\underline{\varphi}', \overline{\varphi}'] \i
}
$
as
\begin{align}
	\abe{c} \cdot \abe{c'}
	&=
	e^{
		[\underline{r} + \underline{r}', \overline{r} + \overline{r}'] +
		[
			\underline{\varphi} + \underline{\varphi}',
			\overline{\varphi} + \overline{\varphi}'
		] \i
	} \text{ and } \label{eq:complexmult} \\
	\abe{c} \join \abe{c'}
	&=
	e^{
		[
			\min(\underline{r}, \underline{r}'),
			\max(\overline{r}, \overline{r}')
		] + 
		[
			\min(\underline{\varphi}, \underline{\varphi}'),
			\max(\overline{\varphi}, \overline{\varphi}')
		] \i
	}. \label{eq:complexjoin}
\end{align}
Again, simple arithmetic shows that
\crefrange{eq:complexmult}{eq:complexjoin} are sound.
We note that to increase precision, we could map complex numbers to a canonical
representation before joining them, by exploiting
$e^{r+\phi\i}=e^{r+(\phi+2\pi)\i}$ to ensure that $\underline{\varphi}$ lies in
$[0, 2\pi]$.

We compute the real part of an abstract complex number
$\abe{c}=e^{[\underline{r},\overline{r}] +
[\underline{\varphi},\overline{\varphi}] \i}$ as
\begin{talign} \label{eq:complexsum}
	\Re(\abe{c})
	&= 
	\exp([\underline{r}, \overline{r}]) \cdot 
		\cos([\underline{\varphi}, \overline{\varphi}]),
\end{talign}
where we rely on interval transformers to evaluate the right-hand side. The
soundness of \cref{eq:complexsum} follows from the standard formula to extract
the real part from a complex number in polar coordinates.
We will later use \cref{eq:complexsum} to compute $\ttrace{\abe{\rho}}$.
To this end, we also need the transformer
\begin{align} \label{eq:i-exp}
	\i^{\abe{b}} = \bigsqcup_{b \in \abe{b}} \{ \i^b \} \in \acomplex.
\end{align}

\para{Pauli Elements}
Recall that a Pauli element $P \in \pauligroup{n}$ has the form $P = \i^v \cdot
P^{(0)} \otimes \cdots \otimes P^{(n-1)}$, for $v$ in $\mathbb{Z}_4$ and
${P^{(k)} \in \{\I, \X, \Y, \Z\}}$.
We therefore parametrize $P$ as a prefactor $v$ (in $\log_\i$ space) and $n$
bare Paulis $P^{(k)}$.

Accordingly, we parametrize abstract Pauli elements $\abe{\pauli} \in \apaulis$
as $\i^{\abe{v}} \cdot \abe{\pauli^{(0)}} \otimes \cdots \otimes
\abe{\pauli^{(n-1)}}$, where $\abe{v} \in \azz$ is a set of possible prefactors
and $\abe{\pauli^{(k)}} \subseteq \{\X,\Y,\Z,\I[2]\}$ are sets of possible Pauli
matrices.
Formally, we interpret $\abe{\pauli}$ as the set of all possible outcomes when
instantiating all sets:
\begin{align*}
	\gamma(\abe{\pauli})
	=
	\left\{
		\i^{v} \cdot \otimes_{i=0}^{n-1} \pauli^{(i)} \;\middle|\; v \in \abe{v}, \pauli^{(i)} \in \abe{\pauli^{(i)}}
	\right\}.
\end{align*}
We define the product of two abstract Pauli elements as:
\begin{equation}
	\abe{P}
	\abe{Q}
	=
	\i^{
		\pref(\abe{P}\abe{Q})
	}
	\otimes\limits_{i=0}^{n-1} \bare\Big(\abe{P^{(i)}}\abe{Q^{(i)}}\Big).
	\label{eq:pauliprod}
\end{equation}
To this end, we evaluate the prefactor induced by multiplying Paulis as
\begin{align}\label{eq:prefactor-of-product}
	\pref(\abe{P}\abe{Q}) = \pref(\abe{P}) + \pref(\abe{Q}) + \sum_{i=1}^n \pref(\abe{P^{(i)}} \abe{Q^{(i)}}),
\end{align}
where we can evaluate the summands in the right-hand side of
\cref{eq:prefactor-of-product} by precomputing them for all possible sets of
Pauli matrices $\abe{P^{(i)}}$ and $\abe{Q^{(i)}}$.
Then, we compute the sum using \cref{eq:addition-set}.
Analogously, we can evaluate $\bare\Big(\abe{P^{(i)}}\abe{Q^{(i)}}\Big)$ by
precomputation.
The soundness of \cref{eq:pauliprod} follows from applying the multiplication
component-wise, and then separating out prefactors from bare Paulis.

We also define the conjugation of an abstract Pauli element $\abe{P}$ with
$k$-qubit gate $U$ padded to $n$ qubits  as:
\begin{align}
	\applyat[i]{U} \abe{\pauli} \applyat[i]{U^\dagger}
	&=
	\applyat[i]{U}
	\Big(
		\i^{\abe{v}} \cdot
		\abe{\pauli^{(0:i)}} \otimes
		\abe{\pauli^{(i:i+k)}} \otimes
		\abe{\pauli^{(i+k:n)}}
	\Big)
	\applyat[i]{U^\dagger} \nonumber 
	\\
	&=
	\i^{
		\abe{v} +
		\pref(U\abe{\pauli^{(i:i+k)}}U^\dagger)
	} \cdot
	\abe{\pauli^{(0:i)}} \otimes
	\bare(U \abe{\pauli^{(i:i+k)}} U^\dagger) \otimes
	\abe{\pauli^{(i+k:n)}}, \label{eq:pauliconj}
\end{align}
where $\abe{\pauli^{(i:j)}}$ denotes $\abe{\pauli^{(i)}} \otimes \cdots \otimes
\abe{\pauli^{(j-1)}}$. Because $k$ is typically small, and all possible gates
$U$ are known in advance, we can efficiently precompute
$\pref(U\abe{\pauli^{(i:i+k)}}U^\dagger)$ and $\bare(U \abe{\pauli^{(i:i+k)}}
U^\dagger)$.
We note that this only works if the result of conjugation is indeed an
(abstract) Pauli element---if not, this operation throws an error\footnote{We
can recover from this error by decomposing $U$ as a sum of bare Pauli elements,
as mentioned in \cref{sec:background}, see also
\crefrange{eq:efficient-gate-application}{eq:inefficient-gate-application}.}.
The soundness from \cref{eq:pauliconj} follows from applying $U$ to qubits $i$
through $i+k$, and then separating out prefactors from bare Paulis.

We define the commutator $\commut{\abe{P}}{\abe{Q}}$ of two abstract Pauli
elements $\abe{P}$ and $\abe{Q}$ as
\begin{align}
	\commut{
		\Big(
			\i^{\abe{v}} \cdot
			\otimes\limits_{i=0}^{n-1}
			\abe{\pauli^{(i)}}
		\Big)
	}{
		\Big(
			\i^{\abe{w}} \cdot
			\otimes\limits_{i=0}^{n-1}
			\abe{Q^{(i)}}
		\Big)
	}
	=
	\sum_{i=1}^n \commut{\abe{\pauli^{(i)}}}{\abe{Q^{(i)}}}. \label{eq:paulicomm}
\end{align}
Here, we evaluate the sum using \cref{eq:addition-set}, and efficiently
evaluate $\commut{\abe{\pauli^{(i)}}}{\abe{Q^{(i)}}} \in \abools$ by precomputing:
\begin{align*}
	\commut{\abe{\pauli^{(i)}}}{\abe{Q^{(i)}}}
	=
	\left\{
		\commut{\pauli^{(i)}}{Q^{(i)}} \;\middle|\; \pauli^{(i)} \in \abe{\pauli^{(i)}}, Q^{(i)} \in \abe{Q^{(i)}}
	\right\}.
\end{align*}
The soundness of \cref{eq:paulicomm} can be derived from the corresponding
concrete equation, which can be verified using standard linear algebra.

We define the join of abstract Pauli elements as
\begin{equation}
		\Big(
			\i^{\abe{v}}
			\otimes\limits_{i=0}^{n-1}
			\abe{\pauli^{(i)}}
		\Big)
	\join
		\Big(
			\i^{\abe{w}}
			\otimes\limits_{i=0}^{n-1}
			\abe{Q^{(i)}}
		\Big)
	=
		\i^{\abe{v} \join \abe{w}}
		\otimes\limits_{i=0}^{n-1}
		\Big( \abe{P^{(i)}} \cup \abe{Q^{(i)}} \Big),
	\label{eq:paulijoin}
\end{equation}
where $\abe{P^{(i)}} \cup \abe{Q^{(i)}} \subseteq \{\I,\X,\Y,\Z\}$.
Clearly, this join is sound.

Finally, we define an abstract transformer for modifying the sign of an abstract
Pauli element $\abe{P}$ by:
\begin{align}
	(-1)^{\abe{b}} \cdot
	\Big(
		\i^{\abe{v}} \cdot
		\otimes_{i=0}^{n-1}
		\abe{\pauli^{(i)}}
	\Big)
	=
	\i^{\abe{v} + 2 \cdot \abe{b}} \cdot
	\otimes_{i=0}^{n-1}
	\abe{\pauli^{(i)}}
	\label{eq:pauliminusone}
\end{align}
The soundness of \cref{eq:pauliminusone} follows directly from $(-1)^v =
\i^{2v}$. 

\para{Abstract Density Matrices}
The concrete and abstract domains introduced previously allow us to represent an
abstract density matrix $\abe{\rho} \in \adensities$ as follows:
\begin{align}
	\abe{\rho} 
	= 
		{r} \star
		{\abe{c}}
		\cdot
		{\abe{P}}
		\cdot 
		\prod_{j=1}^n 
			\tfrac{\I + (-1)^{\abe{b}_{j}} {Q_j}}{2}.
	\label{eq:rho-abstract}
\end{align}
Here, $r \in \mathbb{N}$, $\abe{c} \in \acomplex$, $\abe{P} \in \apaulis$,
$\abe{b}_{j} \in \abools$, and $Q_{j} \in \pauligroup{n}$. Note that $Q_j$ are
concrete Pauli elements, while $\abe{P}$ is abstract.
Further, both $\abe{P}$ and $Q_{j}$ can have a prefactor, i.e., are not
necessarily bare Paulis.
Here, the integer counter $r$ records how many concrete summands were
abstracted. Specifically, $r \star \abe{x}$ is defined as $\sum_{i=1}^r
\abe{x}$.
Overall, we interpret $\abe{\rho}$ as:
\begin{equation}
	\gamma(\abe{\rho}) = \left\{ 
		\sum\limits_{i = 1}^{r}
		c_{i} P_{i}
		\prod\limits_{j=1}^n 
		\tfrac{\I + (-1)^{b_{ij}} {Q_j}}{2}
		\;\middle|\;
		c_{i} \in \gamma(\abe{c}), 
		P_{i} \in \gamma(\abe{P}), 
		b_{ij} \in \gamma(\abe{b}_{j})
	\right\},
\end{equation}
relying on the previously discussed interpretations of {$\complex$},
{$\pauligroup{n}$}, and {$\bools$}.

\section{Abstract Transformers} \label{sec:abstract-transformers}
We now formalize the abstract transformers used by \tool to simulate quantum
circuits.
The soundness of all transformers is straightforward, except for the trace
transformer (\cref{sec:trace}) which we discuss in \cref{app:soundness}.

\para{Initialization}
We start from initial state $\otimes_{i=1}^n \ket{0}$, which corresponds to
density matrix
\begin{equation*}
	\abe{\rho}
	\overset{}{=}
	\prod\limits_{j=1}^n 
		\tfrac{\I + \applyat[j]{Z}}{2}
	= 
	1 \star
	e^{[0,0] + \i[0,0]} \cdot \i^{\{0\}} \{\I\} 
		\prod\limits_{j=1}^n 
		\tfrac{\I + (-1)^{\{0\}} \applyat[j]{Z}}{2},
\end{equation*}
as established in \cite[Sec.~III]{aaronson_improved_2004}.
We note that we can prepare other starting states by applying appropriate gates
to the starting state $\otimes_{i=1}^n \ket{0}$.

\subsection{Gate Application}
Analogously to the concrete case discussed in \cref{sec:background}, applying a
unitary gate $U$ to $\abe{\rho}$ yields:
\begin{equation}
	U \abe{\rho} U^{\dagger}  
	= 
		r \star
		\abe{c} \abe{P'}
		\prod_{j=1}^n 
		\tfrac{\I + (-1)^{\abe{b}_{j}} Q'_j}{2}, \label{eq:efficient-gate-application}
\end{equation}
for $\abe{P'} = U \abe{P} U^\dagger$ and $Q'_j = U Q_j U^{\dagger}$.

If either $U \abe{P} U^\dagger \not\subseteq \pauligroup{n}$ or $U Q_j
U^\dagger \not\subseteq \pauligroup{n}$, \cref{eq:efficient-gate-application}
still holds, but we cannot represent the resulting matrices efficiently.
In this case, again analogously to \cref{sec:background}, we instead decompose
the offending gate as $U = \sum_{p} d_p R_p$, with $R_p \in \pauligroup{n}$ and obtain
\begin{equation}
	U \abe{\rho} U^\dagger = 
	\sum\limits_{pq}
		r \star \abe{c_{pq}}
		\abe{P_{pq}}
			\prod\limits_{j=1}^n
			\tfrac{\I + (-1)^{\abe{b_{jq}}} {Q_j}}{2},
	\label{eq:inefficient-gate-application}
\end{equation}
for $\abe{c}_{pq} = d_p \abe{c} d_q^*$, $\abe{P}_{pq}' =  R_p \abe{P}
R_{q}$, and $\abe{b_{jq}} = {\abe{b_{j}} + \commut{Q_j}{R_{q}}}$.

Overall, we can evaluate
\crefrange{eq:efficient-gate-application}{eq:inefficient-gate-application} by
relying on the abstract transformers from \cref{sec:abstractElements}.

\para{Compression}
To prevent an exponential blow-up of the number of summands and to adhere to the
abstract domain of $\abe{\rho}$ which does not include a sum, we compress all
summands to a single one. Two summands can be joined as follows:
\begin{align*}
	&\left(
		r_1 \star
		\abe{c}_1 \abe{P}_1 \prod_{j=1}^n 
		\tfrac{\I + (-1)^{\abe{b}_{1j}} Q_j}{2}
	\right)
	\sqcup
	\left(
		r_2 \star
		\abe{c}_2 \abe{P}_2 \prod_{j=1}^n 
		\tfrac{\I + (-1)^{\abe{b}_{2j}} Q_j}{2}
	\right)
	\\
	&=
		r \star
		\abe{c}
		\abe{P}
		\prod_{j=1}^n 
		\tfrac{\I + (-1)^{\abe{b}_{j}} Q_j}{2},
\end{align*}
where $r = r_1 + r_2$, $\abe{c} = \abe{c}_1 \sqcup \abe{c}_2$, $\abe{b}_{j} =
\abe{b}_{1j} \sqcup \abe{b}_{2j}$, and $\abe{P} = \abe{P}_1 \join \abe{P}_2$.
The key observation here is that the concrete $Q_j$ are independent of the
summand, and thus need not be joined.

We note that we could also only merge \emph{some} summands and leave the others
precise---investigating the effect of more flexible merging strategies could be
interesting future research.

\subsection{Measurement}\label{sec:measurement}
We now describe how to perform Pauli measurements, by extending the (concrete)
stabilizer simulation to abstract density matrices.
The correctness of the concrete simulation was previously established in
~\cite[Sec.~VII.C]{aaronson_improved_2004}, while the correctness of the abstraction is immediate. 

\para{Simulating Measurement}
Applying a  Pauli measurement in basis $R \in \bare(\pauligroup{n})$ has a
probabilistic outcome and transforms $\rho$ to $\rho_+ = \tfrac{\I +
R}{2}\rho\tfrac{\I+R}{2}$ with probability $\trace(\rho_+)$ or $\rho_- =
\tfrac{\I - R}{2}\rho\tfrac{\I-R}{2}$ with probability $\trace(\rho_-)$. We
describe how to compute $\rho_+$. Computing $\rho_-$ works analogously by using
$-R$ instead of $R$.

In the following, we will consider a concrete state $\rho$ as defined in \cref{sec:background} and an abstract state $\abe{\rho}$ as defined in \cref{eq:rho-abstract}:
\begin{align} 
	\rho
	=
	\sum_{i=1}^m c_i P_i
		\prod_{j=1}^n
		\tfrac{\I + (-1)^{b_{ij}} Q_j}{2}
	\quad\text{and}\quad
	\abe{\rho}
	=
	r \star \abe{c} \abe{P}
		\prod_{j=1}^n
		\tfrac{\I + (-1)^{\abe{b_{j}}} Q_j}{2}.
\end{align}

Concrete simulation of measurement distinguishes two cases: either (i) $R$ commutes with all $Q_j$ or (ii) $R$ anti-commutes with at least one $Q_j$. Note that as the $Q_j$ are concrete in an abstract state $\abe{\rho}$, those two cases translate directly to the abstract setting.
We now describe both cases for concrete and abstract simulation.

\para{Background: Concrete Case (i)}
In this case, we assume $R$ commutes with all $Q_j$. Focusing on a single
summand $\rho_i$ of $\rho$, measurement maps it
to (see~\cite{aaronson_improved_2004}):
\begin{equation} \label{eq:rewrite1concretecase1}
	\rho_{i,+} = c_i  \tfrac{\I + R}{2}
	P_i \tfrac{\I + R}{2} 
	\prod_{j=1}^n \tfrac{\I + (-1)^{b_{ij}} Q_j}{2}.
\end{equation}

Let us first introduce the notation $\{(-1)^{b_{ij}}Q_j\} \rightsquigarrow R$,
denoting that $R$ can be written as a product of selected Pauli elements from $\{(-1)^{b_{ij}}Q_j\}$.
Symmetrically, we write $\{(-1)^{b_{ij}}Q_j\} \not\rightsquigarrow R$ if $R$
cannot be written as such a product.
As shown in \cite{aaronson_improved_2004}, if $\{(-1)^{b_{ij}}Q_j\}
\rightsquigarrow R$ then $\tfrac{\I + R}{2}\prod_{j=1}^n \tfrac{\I +
(-1)^{b_{ij}} Q_j}{2}$ is equal to $\prod_{j=1}^n \tfrac{\I +
(-1)^{b_{ij}} Q_j}{2}$ and if $\{(-1)^{b_{ij}}Q_j\}
\not\rightsquigarrow R$ then $\tfrac{\I + R}{2}\prod_{j=1}^n \tfrac{\I +
(-1)^{b_{ij}} Q_j}{2}$ is null. 
Further, using that $R^2 = \I$, we get from \cref{eq:rewrite1concretecase1} that
if $P_i$ commutes with $R$, $\rho_{i, +}$ is equal to $\rho_i$, otherwise, $P_i$
anti-commutes with $R$ and $\rho_{i, +}$ is null.
Putting it all together, we finally get:
\begin{equation} \label{eq:measure-concrete-i}
	\rho_+ = \sum_{i=1}^m \rho_{i,+} = \sum_{i=1}^m
	\begin{cases}
		c_i P_i
		\prod\limits_{j=1}^n
		\tfrac{\I + (-1)^{b_{ij}} Q_j}{2} & \text{if } \{(-1)^{b_{ij}}Q_j\} \rightsquigarrow R \text{ and } \commut{R}{{P_i}} = 0,\\
		0 & \text{if } \{(-1)^{b_{ij}}Q_j\} \not\rightsquigarrow R \text{ or } \commut{R}{{P_i}} = 1.
	\end{cases}
\end{equation}

\para{Abstract Case (i)}
Let us first define $\rightsquigarrow^u$ and $\not\rightsquigarrow^u$ for a
concrete $R$, concrete $Q_j$ and abstract ${\abe{b_{j}}}$.
We say $\{(-1)^{\abe{b_{j}}}Q_j\} \rightsquigarrow^u R$ if for all $j$, for
all $b_{j} \in \gamma({\abe{b_{j}}})$, we have $\{(-1)^{{b_{j}}}Q_j\}
\rightsquigarrow R$.
Similarly, we say $\{(-1)^{\abe{b_{j}}}Q_j\} \not\rightsquigarrow^u R$ if for
all $j$, for all $b_{j} \in \gamma({\abe{b_{j}}})$, we have
$\{(-1)^{{b_{j}}}Q_j\} \not\rightsquigarrow R$.
Note that $\rightsquigarrow^u$ and $\not\rightsquigarrow^u$ are
under-approximations, and there can exist some $R$ and $\{(-1)^{\abe{b_{j}}}Q_j\}$
such that neither apply.
Using those two abstract relations, we get the abstract transformer for $\abe{\rho_+}$:
\begin{talign}\label{eq:measure-abstract-i}
	r \star \begin{cases}
		 \abe{c} \abe{P}
		\prod\limits_{j=1}^n
		\tfrac{\I + (-1)^{\abe{b_{j}}} Q_j}{2} & \text{if } \{(-1)^{\abe{b_{j}}}Q_j\} \rightsquigarrow^u R \text{ and } \commut{R}{\abe{P}} = \{0\},\\
		0 & \text{if } \{(-1)^{\abe{b_{j}}}Q_j\} \not\rightsquigarrow^u R \text{ or } \commut{R}{\abe{P}} = \{1\},\\
		\left(\abe{c}\sqcup \{0\}\right) \abe{P}
		\prod\limits_{j=1}^n
		\tfrac{\I + (-1)^{\abe{b_{j}}} Q_j}{2} & \text{otherwise.}
	\end{cases}
\end{talign}

We can evaluate \cref{eq:measure-abstract-i} by relying on the abstract
transformers from \cref{tab:abstractops} and by evaluating~$\rightsquigarrow^u$ as discussed shortly.

\para{Background: Concrete Case (ii)}
We now suppose $R$ anti-commutes with at least one $Q_j$. In this case, we can rewrite $\rho$ such that $R$ anti-commutes with $Q_1$, and commutes with all other $Q_j$.
Specifically, we can select any $Q_{j^*}$ which anti-commutes with $R$, swap
$b_{ij^*}$ and $Q_{j^*}$ with $b_{i1}$ and $Q_1$, and replace all other $Q_j$
anti-commuting with $R$ by $Q_1 Q_j$ (and analogously $b_{ij}$ by $b_{ij} +
b_{i1}$), which leaves $\rho$ invariant (see~\cite{aaronson_improved_2004}).
Assuming $\rho$ is the result after this rewrite, we have:
\begin{align}	\label{eq:measure-concrete-ii}
	\rho_+
	&=
	\sum_i \tfrac{1}{2} c_i
	P_i'
	\tfrac{\I + (-1)^{0} R}{2}
	\prod\limits_{j=2}^n \tfrac{\I + (-1)^{b_{ij}} Q_j}{2}, \\
	&\text{ where }  
	P_i' = \begin{cases}
		P_i & \text{if } \commut{R}{P_i} = 0, \\
		(-1)^{b_{i1}} P_i Q_1 & \text{if } \commut{R}{P_i} = 1.
	\end{cases}\nonumber
\end{align}
Overall, after rewriting $\rho$ as above, \cref{eq:measure-concrete-ii}
replaces $c_i$ by $\tfrac{1}{2}c_i$, $P_i$ by $P_i'$, $b_{i1}$ by $0$, and $Q_1$
by $R$.

\para{Abstract Case (ii)}
In the abstract case, we first apply the same rewrite as in the concrete case,
where we pick $j^*$ as the first $j$ for which $Q_j$ anti-commutes with
$R$.~\footnote{We could also consider other strategies than picking the first
possible $j$, for example picking a $j$ for which $\abe{b_j}$ is precise
whenever possible, to increase precision.} Then, directly abstracting
\cref{eq:measure-concrete-ii} yields:
\begin{align}\label{eq:measure-abstract-ii}
	\abe{\rho_+} &= r \star
	\tfrac{1}{2} \abe{c} \abe{P'} \tfrac{\I + (-1)^{\{0\}} R}{2}
	\prod\limits_{j=2}^n \tfrac{\I + (-1)^{\abe{b_{j}}} Q_j}{2}, \\
	\text{ where } \abe{P'}& = 
	\begin{cases}
		\abe{P} &\text{if } \commut{R}{\abe{P}} = \{0\}, \\
		(-1)^{\abe{b_{1}}} \abe{P} Q_1  & \text{if } \commut{R}{\abe{P}} = \{1\}, \\
		\abe{P} \sqcup (-1)^{\abe{b_{1}}} \abe{P} Q_1 & \text{otherwise.}
	\end{cases}\nonumber
\end{align}
Here, we replace $\abe{c}$ by $\tfrac{1}{2}\abe{c}$, $\abe{P}$ by $\abe{P'}$,
$\abe{b_{1}}$ by $\{0\}$, and $Q_1$ by $R$. When defining $\abe{P'}$, we follow
the two cases from \cref{eq:measure-concrete-ii} when our abstraction
is precise enough to indicate which case we should choose, or join the results
of both cases otherwise.
Again, we can evaluate \cref{eq:measure-abstract-ii} by relying on the
abstract transformers from \cref{tab:abstractops}.

\para{Joining Both Measurement Results}
For measurements occurring within a quantum circuit, stabilizer simulation
generally requires randomly selecting either $\rho_+$ or $\rho_-$ with
probability $\tr(\rho_+)$ and $\tr(\rho_-)$, respectively, and then continues
only with the selected state.
In contrast, \tool can join both measurement outcomes into a single abstract
state $\abe{\rho_+} \join \abe{\rho_-}$, as the $Q_j$ are the same in both. This
allows us to pursue both measurement outcomes simultaneously, as we demonstrate
in~\cref{sec:eval}.

\subsection{Efficiently computing $\rightsquigarrow$}
\label{sec:efficient-comp-squigg}
To simulate the result of a measurement, we introduced the new operator
$\{(-1)^{b_{j}}Q_j\} \rightsquigarrow R$, denoting that some Pauli $R$ can be
written as a product of $\{(-1)^{b_{j}}Q_j\}$. We now show how to compute
$\rightsquigarrow$ efficiently. 

\para{Background: Concrete case}
We first note that $\{(-1)^{b_{j}}Q_j\} \rightsquigarrow R$ holds if and only if there exist some $x \in \bools^n$ such that:
\begin{equation}\label{eq:non_bare}
R \overset{!}{=} \prod_{j=1}^n \left((-1)^{b_{j}}Q_j\right)^{x_j}.
\end{equation}
Further, this solution $x$ would satisfy:
\begin{equation}\label{eq:bare}
	\bare(R) \overset{!}{=} \bare\left(\prod_{j=1}^n \left((-1)^{b_{j}}Q_j\right)^{x_j}\right)
\end{equation}
\cref{eq:bare} has a solution if and only if $R$ commutes with all the $Q_j$, in
which case this solution $x$ is unique (see~\cite{aaronson_improved_2004}).
Hence, to check if $\{(-1)^{b_{j}}Q_j\} \rightsquigarrow R$, we can first verify
whether $\commut{R}{Q_j} = 0$ for all $j$, and if so, check if the unique $x$
satisfying \cref{eq:bare} also satisfies \cref{eq:non_bare}.

\para{Background: Finding $x$ for \cref{eq:bare}}
To compute this solution $x$, the stabilizer simulation relies critically on an
isomorphism $\encpauli$ between Pauli matrices $\{\I, \X, \Y, \Z\}$ and
$\bools^{2}$.

Specifically, $\encpauli$ maps $I$ to $\Ie$, $X$ to $\Xe$, $Y$ to $\Ye$, and $Z$
to $\Ze$. Further, $\encpauli$ extends naturally to bare Pauli elements $R \in
\bare{\left(\pauligroup{n}\right)}$ and tuples $Q=(Q_1,\dots,Q_n) \in
\bare{\left(\pauligroup{n}\right)}^n$ by:
\begin{align*}
	\encpauli(R) = \begin{psmallmatrix}
		\encpauli(R^{(0)}) \\
		\vdots \\
		\encpauli(R^{(n-1)})
	\end{psmallmatrix} \text{ and }
	\encpauli(Q) = \begin{psmallmatrix}
		\encpauli(Q_1^{(0)}) & \cdots & \encpauli(Q_n^{(0)}) \\
		\vdots & \ddots & \vdots \\
		\encpauli(Q_1^{(n-1)}) & \cdots & \encpauli(Q_n^{(n-1)}) \\
	\end{psmallmatrix},
\end{align*}
where $\encpauli(R) \in \bools^{2n \times 1}$ and $\encpauli(Q) \in \bools^{2n
\times n}$.
We can naturally extend $\encpauli$ to $\pauligroup{n}$, by defining
$\encpauli(R) = \encpauli(\bare(R))$.

This isomorphism $\encpauli$ is designed so that the product of bare Pauli elements ignoring prefactors
corresponds to a component-wise addition of encodings:
\begin{equation} \label{eq:embedding}
	\encpauli(\pauli_1 \pauli_2) = \encpauli(\pauli_1) + \encpauli(\pauli_2).
\end{equation}

Using \cref{eq:embedding}, we can obtain
solution candidates $x$ for \cref{eq:bare} by solving a system of linear equations using
Gaussian elimination modulo~$2$:
\begin{talign}
	\encpauli \left( R
	\right) \overset{!}{=} \encpauli \left( \prod\limits_{j=1}^n Q_j^{x_j} \right) =
	\sum\limits_{j=1}^n \encpauli(Q_j)x_j = \encpauli(Q)x. \label{eq:reduce-to-linear}
\end{talign}
Because in our case, $\encpauli(Q)$ is over-determined and has full rank,
\cref{eq:reduce-to-linear} either has no solution, or a unique solution $x$.

\para{Background: Checking prefactors}
Once we have found the unique $x$ (if it exists) satisfying \cref{eq:bare} as described above, we need to check if it also satisfies \cref{eq:non_bare}. It is enough to check if the prefactors match:
\begin{equation*}
	\pref\left(R\right) \overset{!}{=} \pref\left(\prod_j (-1)^{b_j
x_j}Q_j^{x_j}\right),
\end{equation*}
or equivalently:
\begin{equation*}
	\pref\left(R\right) - \pref\left(\prod_j Q_j^{x_j}\right) - 2 \sum_j b_j x_j \overset{!}{=} 0,
\end{equation*}
where the subtraction and sum operations are over $\mathbb{Z}_4$.

Putting it all together, we can define $\grppref \colon \pauligroup{n} \times \pauligroup{n}^n \times \bools^n \to \mathbb{Z}_4 \cup \{\lightning\}$ with
\begin{talign}\label{eq:concrete_grppref}
	\grppref(R,Q,b) =
	\begin{cases}
		\lightning & \text{if } \exists j, \commut{R}{Q_j} = 1, \\
		\pref(R) - \pref\left(\prod\limits_{j=1}^n Q_j^{x_j}\right) - 2\sum\limits_{j=1}^n x_j b_{j} & \text{otherwise},
	\end{cases}
\end{talign}
where $x$ is the unique value such that $\encpauli(R) = \encpauli(Q)x$ and
$\lightning$ indicates there is no such $x$.
We then have that $\{(-1)^{b_{j}}Q_j\} \rightsquigarrow R$ if and only if
$\grppref(R,Q,b) = 0$, or equivalently, $\{(-1)^{b_{j}}Q_j\}
\not\rightsquigarrow R$ if and only if $\grppref(R,Q,b) \neq 0$.

\para{$\grppref$ for abstract $\abe{b_{j}}$} For abstract values $\abe{b_{j}}$,
we define $\grppref \colon \pauligroup{n} \times \pauligroup{n}^n \times
\abools^n \to 2^{\mathbb{Z}_4 \cup \{\lightning\}}$ as follows:
\begin{equation}\label{eq:abstract_b_grppref}
	\grppref(R,Q,\abe{b}) =
	\begin{cases}
		\{\lightning\} & \text{if } \exists j, \commut{R}{Q_j} = 1, \\
		\pref(R) - \pref\left(\prod\limits_{j=1}^n Q_j^{x_j}\right) - 2\sum\limits_{j=1}^n x_j \abe{b_{j}} & \text{otherwise.}
	\end{cases}
\end{equation}

Following the same reasoning as above, we have $\{(-1)^{\abe{b_{j}}}Q_j\}
\rightsquigarrow^u R$ if and only if $\grppref(R,Q,\abe{b}) = \{0\}$ and
$\{(-1)^{\abe{b_{j}}}Q_j\} \not\rightsquigarrow^u R$ if and only if
\mbox{$\grppref(R,Q,\abe{b}) \cap \{0\} = \emptyset$}.

\para{$\grppref$ for abstract $\abe{b_{j}}$ and $\abe{R}$}
To compute the trace of a state (see \cref{sec:trace}), we further extend
\cref{eq:concrete_grppref} to abstract $\abe{b_{j}}$ and abstract $\abe{R}$,
and define $\grppref \colon \apaulis[n] \times \pauligroup{n}^n \times \abools^n
\to 2^{\mathbb{Z}_4 \cup \{\lightning\}}$ as:
\begin{align}
	\grppref(\abe{R},Q,\abe{b}) &=
	\begin{cases}
		\{\lightning\} & \text{if } \exists j. \commut{\abe{R}}{Q_j} = \{1\}, \\
		\pref(\abe{R}) - \pref\left(\prod\limits_{j=1}^n Q_j^{\abe{x_j}}\right) - 2\sum\limits_{j=1}^n \abe{x_j} \abe{b_{j}} & \text{if } \forall j. \commut{\abe{R}}{Q_j} = \{0\}, \\
		\pref(\abe{R}) - \pref\left(\prod\limits_{j=1}^n Q_j^{\abe{x_j}}\right) - 2\sum\limits_{j=1}^n \abe{x_j} \abe{b_{j}} \cup \{\lightning\} & \text{otherwise,}
	\end{cases} \label{eq:f-abstract}
	\\ 
	\text{ for } \encpauli(\abe{R}) &= \encpauli(Q)\abe{x}. \label{eq:abstract-solve}
\end{align}
Here, evaluating \cref{eq:f-abstract} requires evaluating
$Q_j^{\abe{b}}$ for an abstract boolean $\abe{b}$, which we define naturally
as
\begin{equation*}
	Q_j ^{\abe{b}} 
	:= 
	\begin{cases}
		\{Q_j\}  & \text{if } \abe{b} = \{1\},
		\\
		\{\I\} & \text{if } \abe{b} = \{0\},
		\\
		\{Q_j, \I\} & \text{if } \abe{b} = \{0,1\}.
	\end{cases}
\end{equation*}
Further, \cref{eq:abstract-solve} requires over-approximating all $x$
which satisfy the linear equation ${\encpauli(\abe{R}) = \encpauli(Q) x}$. Here,
we naturally extend $\encpauli$ to abstract Paulis by joining their images. For
instance, we have that $\encpauli(\{X, Y\}) = \left\{\Xe\right\} \sqcup
\left\{\Ye\right\} = \big(\begin{smallmatrix} \{1\} \\ \{0, 1\}
\end{smallmatrix}\big)$.
We then view ${\encpauli(\abe{R}) = \encpauli(Q) x}$ as a system of linear
equations $\abe{b}=Ax$, where the left-hand side consists of abstract booleans
$\abe{b} \in \abools^{2n}$. We then drop all equations in this equation system
where the left-hand side is $\{0,1\}$, as they do not constrain the solution
space.
This updated system is fully concrete, hence we can solve it using Gaussian
elimination.
We get either no solution, or a solution space $y + \sum_{k = 1}^p \lambda_k
u_k$, where $y$ is a possible solution and $u_1, ..., u_p$ is a possibly empty
basis of the null solution space.
In the case of no solution, $\abe{x}$ is not needed in
\cref{eq:f-abstract}. 
Otherwise, we can compute $\abe{x_j}$ as $\{y_j + \sum_{k = 1}^m \lambda_k u_{k,
j} \mid \lambda_k \in \bools \}$.

\subsection{Trace}\label{sec:trace}
Recall that the probability of obtaining state $\rho_+$ when measuring $\rho$ is
$\ttrace{\rho_+}$.
We now describe how to compute this trace using $\grppref$ defined above.

\para{Background: Concrete Trace}
Following \cite{aaronson_improved_2004}, we compute the trace of a density matrix $\rho$ by:
\begin{equation}
	\ttrace{\rho} 
	= \sum_{i=1}^m\Re \Big( c_i \i^{\grppref(P,Q,b_i)} \Big),
	\label{eq:trace-concrete}
\end{equation}
where we define $\i^{\lightning} := 0$.
Because the trace of a density matrix is always real, $\Re(\cdot)$ is redundant,
but will be convenient to avoid complex traces in our abstraction.

\para{Abstract Trace}
For an abstract state $\rho$, we define:
\begin{equation}
	\ttrace{\abe{\rho}} 
	= r \cdot \Re\Big( \abe{c} \i^{\grppref(\abe{P},Q,\abe{b})} \Big),
	\label{eq:def-abstract-trace}
\end{equation}
where we use $\grppref(\cdot)$ as defined in \cref{eq:f-abstract}.

\section{Implementation} \label{sec:implementation}
In the following, we discuss our implementation of the abstract transformers
from \cref{sec:abstractElements} and \cref{sec:abstract-transformers} in \tool.

\para{Language and Libraries}
We implemented \tool in Python~3.8, relying on Qiskit~0.40.0~\cite{Qiskit} for
handling quantum circuits, and a combination of NumPy~1.20.0~\cite{numpy} and
Numba~0.54~\cite{numba} to handle matrix operations.

\para{Bit Encodings}
An abstract density matrix $\abe{\rho} = {r} \star
{\abe{c}}
\cdot
{\abe{P}}
\cdot 
\prod_{j=1}^n \tfrac{\I + (-1)^{\abe{b}_{j}} {Q_j}}{2}$
is encoded as a tuple $(r, \abe{c}, \abe{P}, \abe{b_{1}}, ...,\abe{b_{n}} , Q_1,
\dots, Q_n)$. To encode the concrete Pauli matrices $Q_j$, we follow concrete
stabilizer simulation encodings such as \cite{gidney_stim_2021} and encode Pauli
matrices $P$ using two bits $\encpauli(P)$ (see
\cref{sec:efficient-comp-squigg}).
To encode abstract elements of a finite set we use
bit patterns. For example, we encode $\abe{b_1} = \{1,0\} \in \abools$ as $11_2$, where the
least significant bit (i.e. the right-most bit) indicates that $0 \in \abe{b_1}$.
Analogously, we encode $\abe{v} = \{3,0\} \in \azz$ as $1001_2$.
Further, we encode $\{\Z,\Y\}$ as $1100_2$, where the indicator bits correspond
to $\Z$, $\Y$, $\X$, and $\I$, respectively, from left to right. Hence the
abstact Pauli $\abe{P} = ({\{0,3\}}, {\{\Z,\Y\}}, {\{\X\}})$ would be
represented as $(1001_2, 1100_2, 0010_2)$.

\para{Implementing Transformers}
The abstract transformers on abstract density matrices can be implemented using
operations in $\abools, \azz, \acomplex$, and $\apaulis[1]$. As $\abools, \azz$,
and $\apaulis[1]$ are small finite domains, we can implement operations in these
domains using lookup tables, which avoids the need for bit manipulation tricks.
While such tricks are applicable in our context (e.g.,
\cite{aaronson_improved_2004} uses bit manipulations to compute $\applyat[i]{H}
P \applyat[i]{H^\dagger}$ for $P \in \pauligroup{n}$), they are generally hard
to come up with~\cite{hackersdelight}.
In contrast, the efficiency of our lookup tables is comparable to that of bit
manipulation tricks, without requiring new insights for new operations.

For example, to evaluate $\{\} + \{0\}$ over $\abools$ using
\cref{eq:addition-set}, we encode the first argument $\{\}$ as $00$ and the
second argument $\{0\}$ as $01$. Looking up entry $(00,01)$ in a two-dimensional
pre-computed table then yields $00$, the encoding of the correct result $\{\}$.
We note that we cannot implement this operation directly using a XOR instruction
on encodings, as this would yield incorrect results: $00 \text{ XOR } 01 = 01
\simeq \{0\}$, which is incorrect.

\para{Gaussian Elimination}
To efficiently solve equations modulo two as discussed in
\cref{sec:abstract-transformers}, we implemented a custom Gaussian elimination
relying on bit-packing (i.e., storing $32$ boolean values in a single $32$-bit
integer).
In the future, it would be interesting to explore if Gaussian elimination could
be avoided altogether, as suggested by previous works~\cite{aaronson_improved_2004,gidney_stim_2021}.

\newcommand{\corners}[1]{\textswab{C}(#1)}

\para{Testing}
To reduce the likelihood of implementation errors, we have complemented \tool
with extensive automated tests.
We test that abstract transformers $f^\sharp$ are sound with respect to concrete
functions $f$, that is to say that
$$
\forall x_1 \in \gamma(\abe{x_1}) \cdots \forall x_k \in \gamma(\abe{x_k}). f(x_1, \dots, x_n) \in f^\sharp(\abe{x_1}, \dots, \abe{x_k}).
$$
We check this inclusion for multiple selected samples of $\abe{x_i}$ and $x_i \in \abe{x_i}$ (typically corner cases).

This approach is highly effective at catching implementation errors, which we
have found in multiple existing tools as shown in \cref{sec:eval}.
\section{Evaluation} \label{sec:eval}
We now present our evaluation of \tool, demonstrating that it can establish circuit properties no existing tool can establish.

\begin{table*}
	\caption{Description of benchmark circuits, where $\text{upper}=\{1,\dots,31\}$ and $\text{lower}=\{32,\dots,62\}$.}
	\label{tab:benchmarks}
	\centering
	\scriptsize
	\resizebox{\linewidth}{!}{
	\setlength{\tabcolsep}{1pt}
	\begin{tabular}{@{}lll@{}}
		\textbf{Circuit} & \textbf{Generation} & \textbf{Gates} (approx.) \\\hline
		\makecell[l]{
			\texttt{Cliff;Cliff} \\
			\includegraphics[scale=0.4]{./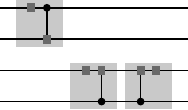}
		} & $\begin{array}{@{}l}
			c_1 \in \big( \{o(q) \mid o \in \{H,S\}, q \in \text{upper} \} \cup \{CX(q_1, q_2) \mid q_1,q_2 \in \text{upper}\} \big)^{10^4} \\
			c_2 \in \big( \{o(q) \mid o \in \{H,S\}, q \in \text{lower} \} \cup \{CX(q_1, q_2) \mid q_1,q_2 \in \text{lower}\} \big)^{10^4} \\
			\text{return } c_1; c_2; \text{opt}(c_2^\dagger)
		\end{array}$ &
		$26\text{k}\times \text{Clifford}$ \\[0.65cm]
		\makecell[l]{
			\texttt{Cliff+T;Cliff} \\
			\includegraphics[scale=0.4]{./figures/circuits/separate.pdf}
		} & $\begin{array}{@{}l}
			c_1 \in \big( \{o(q) \mid o \in \{H,S,T\}, q \in \text{upper} \} \cup \{CX(q_1, q_2) \mid q_1,q_2 \in \text{upper}\} \big)^{10^4} \\
			c_2 \in \big( \{o(q) \mid o \in \{H,S\}, q \in \text{lower} \} \cup \{CX(q_1, q_2) \mid q_1,q_2 \in \text{lower}\} \big)^{10^4} \\
			\text{return } c_1; c_2; \text{opt}(c_2^\dagger)
		\end{array}$ &
		\makecell[l]{
			$23\text{k}\times \text{Clifford}$, \\
			$2.5\text{k}\times T$
		 } \\[0.65cm]
		\makecell[l]{
			\texttt{Cliff+T;CX+T} \\
			\includegraphics[scale=0.4]{./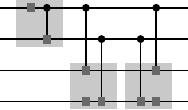}
		} & $\begin{array}{@{}l}
			c_1 \in \big( \{o(q) \mid o \in \{H,S,T\}, q \in \text{upper} \} \cup \{CX(q_1, q_2) \mid q_1,q_2 \in \text{upper}\} \big)^{10^4} \\
			c_2 \in \big( \{CX(q_1, q_2) \mid q_1 \in \text{upper}, q_2 \in \text{lower} \} \cup \{T(q) \mid q \in \text{lower} \} \big)^{10^4} \\
			\text{return } c_1; c_2; \text{opt}(c_2^\dagger)
		\end{array}$ &
		\makecell[l]{
			$18\text{k}\times \text{Clifford}$, \\
			\mbox{$9\text{k}\times T$, $40\times T^\dagger$}
		} \\[0.65cm]
		\makecell[l]{
			\texttt{Cliff+T;CX+T'} \\
			\includegraphics[scale=0.4]{./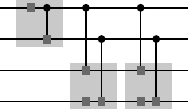}
		} & $\begin{array}{@{}l}
			c_1 \in \big( \{o(q) \mid o \in \{H,S,T\}, q \in \text{upper} \} \cup \{CX(q_1, q_2) \mid q_1,q_2 \in \text{upper}\} \big)^{10^4} \\
			c_2 \in \big( \{CX(q_1, q_2) \mid q_1 \in \text{upper}, q_2 \in \text{lower} \} \cup \{T(q) \mid q \in \text{lower} \} \big)^{10^4} \\
			\text{return } c_1; c_2; \text{opt}(c_2)
		\end{array}$ &
		\makecell[l]{
			$18\text{k}\times \text{Clifford}$, \\
			\mbox{$9\text{k}\times T$, $40\times T^\dagger$}
		} \\[0.65cm]
		\makecell[l]{
			\texttt{Cliff+T;H;CZ+RX} \\
			\includegraphics[scale=0.4]{./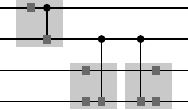}
		} & $\begin{array}{@{}l}
			c_1 \in \big( \{o(q) \mid o \in \{H,S,T\}, q \in \text{upper} \} \cup \{CX(q_1, q_2) \mid q_1,q_2 \in \text{upper}\} \big)^{10^4} \\
			c_h = H(32); \dots; H(62) \\
			c_2 \in \big( \{CZ(q_1, q_2) \mid q_1 \in \text{upper}, q_2 \in \text{lower}\} \cup \{RX_{\frac{\pi}{4}}(q) \mid q \in \text{lower} \} \big)^{10^4} \\
			\text{return } c_1; c_h; c_2; \text{opt}(c_2^\dagger); c_h
		\end{array}$ &
		\makecell[l]{
			$18\text{k}\times\text{Clifford}$, \\
			$5\text{k}\times RX_{\frac{\pi}{4}}$, \\
			$3\text{k}\times T$, $1\text{k}\times T^\dagger$
		 } \\[0.65cm]
		\makecell[l]{
			\texttt{Cliff+T;H;CZ+RX'} \\
			\includegraphics[scale=0.4]{./figures/circuits/cx-h.pdf}
		} & $\begin{array}{@{}l}
			c_1 \in \big( \{o(q) \mid o \in \{H,S,T\}, q \in \text{upper} \} \cup \{CX(q_1, q_2) \mid q_1,q_2 \in \text{upper}\} \big)^{10^4} \\
			c_h = H(32); \dots; H(62) \\
			c_2 \in \big( \{CZ(q_1, q_2) \mid q_1 \in \text{upper}, q_2 \in \text{lower}\} \cup \{RX_{\frac{\pi}{4}}(q) \mid q \in \text{lower} \} \big)^{10^4} \\
			\text{return } c_1; c_h; c_2; \text{opt}(c_2); c_h
		\end{array}$ &
		\makecell[l]{
			$18\text{k}\times\text{Clifford}$, \\
			$5\text{k}\times RX_{\frac{\pi}{4}}$, \\
			$4\text{k}\times T$, $40\times T^\dagger$
		 } \\[0.65cm]
		\makecell[l]{
			\texttt{CCX+H;Cliff} \\
			\includegraphics[scale=0.4]{./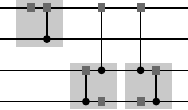}
		} & $\begin{array}{@{}l}
			c_1 \in \big( \{CCX(q_1, q_2, q_3) \mid q_1,q_2,q_3 \in \text{upper}\} \cup \{H(q) \mid q \in \text{upper} \} \big)^{10^4} \\
			c_2 \in \big( \{o(q) \mid o \in \{H,S\}, q \in \text{lower} \} \cup \\
			\hspace{2.8em} \{CX(q_1, q_2) \mid q_1 \in \text{lower}, q_2 \in \text{lower} \cup \text{upper} \} \big)^{10^4} \\
			\text{return } c_1; c_2; \text{opt}(c_2^\dagger)
		\end{array}$ &
		\makecell[l]{
			$22\text{k}\times\text{Clifford}$, \\
			\mbox{$5\text{k}\times CCX$}
		 } \\[0.65cm]
		\makecell[l]{
			\texttt{CCX+H;CX+T} \\
			\includegraphics[scale=0.4]{./figures/circuits/cx.pdf}
		} & $\begin{array}{@{}l}
			c_1 \in \big( \{CCX(q_1, q_2, q_3) \mid q_1,q_2,q_3 \in \text{upper}\} \cup \{H(q) \mid q \in \text{upper} \} \big)^{10^4} \\
			c_2 \in \big( \{CX(q_1, q_2) \mid q_1 \in \text{upper}, q_2 \in \text{lower} \} \cup \{T(q) \mid q \in \text{lower} \} \big)^{10^4} \\
			\text{return } c_1; c_2; \text{opt}(c_2^\dagger)
		\end{array}$ &
		\makecell[l]{
			$16\text{k}\times\text{Clifford}$, \\
			\mbox{$5\text{k}\times CCX$}, \\
			$5\text{k}\times T$, $1\text{k}\times T^\dagger$
		} \\[0.65cm]
		\makecell[l]{
			\texttt{CCX+H;CX+T'} \\
			\includegraphics[scale=0.4]{./figures/circuits/cx-twice.pdf}
		} & $\begin{array}{@{}l}
			c_1 \in \big( \{CCX(q_1, q_2, q_3) \mid q_1,q_2,q_3 \in \text{upper}\} \cup \{H(q) \mid q \in \text{upper} \} \big)^{10^4} \\
			c_2 \in \big( \{CX(q_1, q_2) \mid q_1 \in \text{upper}, q_2 \in \text{lower} \} \cup \{T(q) \mid q \in \text{lower} \} \big)^{10^4} \\
			\text{return } c_1; c_2; \text{opt}(c_2)
		\end{array}$ &
		\makecell[l]{
			$16\text{k}\times\text{Clifford}$, \\
			\mbox{$5\text{k}\times CCX$}, \\
			$7\text{k}\times T$, $30 \times T^\dagger$
		} \\[0.65cm]
		\makecell[l]{
			\texttt{RZ$_2$+H;CX} \\
			\includegraphics[scale=0.4]{./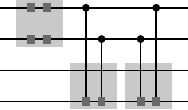}
		} & $\begin{array}{@{}l}
			c_1 \in \big( \{o(q) \mid o \in \{RZ_2, H\}, q \in \text{upper} \} \big)^{10^4} \\
			c_2 \in \big( \{CX(q_1, q_2) \mid q_1 \in \text{upper}, q_2 \in \text{lower}\} \big)^{10^4} \\
			\text{return } c_1; c_2; \text{opt}(c_2^\dagger)
		\end{array}$ &
		\makecell[l]{
			$16\text{k}\times\text{Clifford}$, \\
			\mbox{$5\text{k}\times RZ_2$}
		} \\[0.65cm]
		\makecell[l]{
			\texttt{RZ$_2$+H;CX'} \\
			\includegraphics[scale=0.4]{./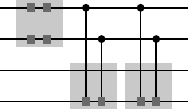}
		} & $\begin{array}{@{}l}
			c_1 \in \big( \{o(q) \mid o \in \{RZ_2, H\}, q \in \text{upper} \} \big)^{10^4} \\
			c_2 \in \big( \{CX(q_1, q_2) \mid q_1 \in \text{upper}, q_2 \in \text{lower}\} \big)^{10^4} \\
			\text{return } c_1; c_2; \text{opt}(c_2)
		\end{array}$ &
		\makecell[l]{
			$16\text{k}\times\text{Clifford}$, \\
			\mbox{$5\text{k}\times RZ_2$}
		 } \\[0.65cm]
		\makecell[l]{
			\texttt{MeasureGHZ} \\
			\includegraphics[scale=0.4]{./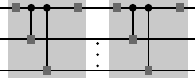}
		} & $\begin{array}{@{}l}
			c_1 = CX(1,2); \dots; CX(1,62) \\
			c_2 = H(1); c_1; \text{measure}(1), c_1 \\
			\text{return } c_2; \dots; c_2 \text{   ($100$ times)}
		\end{array}$ &
		\makecell[l]{
			$12\text{k}\times\text{Clifford}$, \\
			$100\times\text{measure}$
		}
	\end{tabular}
	}
\end{table*}


\subsection{Benchmarks}
To evaluate \tool, we generated $12$ benchmark circuits, summarized and
visualized in \cref{tab:benchmarks}.

\para{Benchmark Circuit Generation}
Each circuit operates on $62$ qubits, partitioned into $31$ \emph{upper} qubits
and $31$ \emph{lower} qubits.
We picked the limit of $62$ qubits because our baseline ESS (discussed shortly)
only supports up to $63$ qubits; \tool is not subject to such a limitation.

Each circuit operates on initial state $\ket{0}$ and is constructed to ensure
that all lower qubits are eventually reverted to state $\ket{0}$. We chose this
invariant as it can be expressed for most of the evaluated tools, as we will
discuss in \cref{sec:baselines}.
Further, as some tools can only check this for one qubit at a time, we only check if the very last qubit is reverted to $\ket{0}$, instead of running $31$ independent checks (which would artificially slow down some baselines). Note that this check is of equivalent difficulty for all lower qubits.

\para{Benchmark Details}
\cref{tab:benchmarks} details how each benchmark circuit was generated.
Most of the circuits are built from three concatenated subcircuits. First, $c_1$
modifies the upper qubits, then $c_2$ modifies the lower qubits (potentially
using gates controlled by the upper qubits) and finally $c_3$ reverts all lower qubits to $\ket{0}$, but in a non-trivial way.
Circuit \texttt{CCX+H;Cliff} slightly deviates from this pattern, as it also
modifies the upper qubits using gates controller by lower qubits.
Further, circuits \texttt{Cliff+T;H;CZ+RX} and \texttt{Cliff+T;H;CZ+RX'}
additionally apply two layers of $H$ gates to the lower qubits.
Finally, circuit \texttt{MeasureGHZ} applies internal measurements, as discussed
below.

The majority of circuits revert the lower qubits to $\ket{0}$ by applying $c_3$,
the inverse of $c_2$ but optimized using PyZX~\cite{kissinger2020Pyzx}---this
obfuscates the fact that $c_2$ and $c_3$ cancel out.
Four circuits, marked with a trailing prime (\texttt{'}), generate $c_3$ by
optimizing the un-inverted $c_2$. They still reset all lower qubits to
$\ket{0}$, but establishing this requires advanced reasoning. Specifically,
\texttt{RZ$_2$+H;CX'} flips each lower qubit an even number of
times.~\footnote{More precisely, when representing the quantum state as a sum
over computational basis states, an even number of flips are applied to each
qubit of each summand.}
Similarly, \texttt{Cliff+T;CX+T'} and \texttt{CCX+H;CX+T'} additionally modify
the phase but still flip each lower qubit an even number of times. Finally,
\texttt{Cliff+T;H;CZ+RX'} flips between states $\ket{+}$ and $\ket{-}$ an even
number of times, where $RX_{\tfrac{\pi}{4}}$ only modifies the phase.

The last benchmark \texttt{MeasureGHZ} first generates a $GHZ$ state
$\tfrac{1}{\sqrt{2}} \ket{0\cdots0} + \tfrac{1}{\sqrt{2}} \ket{1\cdots1}$, and
collapses it to $\ket{0\cdots0}$ or $\ket{1\cdots1}$ by measuring the first
qubit. Then, it resets all qubits to $\ket{0}$ except for the first one. It then
repeats this process, with the first qubit starting in either $\ket{0}$ or
$\ket{1}$. Thus, the state before measurement is either $\tfrac{1}{\sqrt{2}}
\ket{0\cdots0} + \tfrac{1}{\sqrt{2}} \ket{1\cdots1}$ or $\tfrac{1}{\sqrt{2}}
\ket{0\cdots0} - \tfrac{1}{\sqrt{2}} \ket{1\cdots1}$, but every repetition still
resets all lower qubits to $\ket{0}$.

\para{Discussion}
Our benchmark covers a wide variety of gates, with all applying Clifford gates,
seven applying $T$ gates, three applying $CCX$ gates, two applying
$RX_{\frac{\pi}{4}}$ gates (one qubit gate, rotation around the $X$ axis of
$\tfrac{\pi}{4}$ radians), and two applying $RZ_2$ gates (one qubit gate,
rotation around the $Z$ axis of $2$ radians).

All benchmarks are constructed to revert the lower qubits to $\ket{0}$, but in a
non-obvious way. As fully precise simulation of most benchmarks is unrealistic,
we expect that over-approximation is typically necessary to establish this fact.

\subsection{Baselines} \label{sec:baselines}
We now discuss how we instantiated existing tools to establish that a circuit
$c$ evolves a qubit $q$ to state $\ket{0}$. Overall, we considered two tools
based on stabilizer simulation (ESS~\cite{bravyi_simulation_2019} and
QuiZX~\cite{kissinger_simulating_2022}), one tool based on the Feynman path
integral (Feynman~\cite{amy_towards_2019}), one tool based on abstract
interpretation (YP21~\cite{yu_quantum_2021}, in two different modes), and one
tool based on state vectors (Statevector as implemented by
Qiskit~\cite{Qiskit}).

\para{ESS}
Qiskit~\cite{Qiskit} provides an extended stabilizer simulator 
implementing the ideas published in \cite{bravyi_simulation_2019} which
(i)~decomposes quantum circuits into Clifford circuits, (ii)~simulates these
circuits separately, and (iii)~performs measurements by an aggregation across
these circuits.
To check if a circuit $c$ consistently evolves a qubit $q$ to $\ket{0}$, we check if $c$ extended by a measurement of $q$ always yields $0$.
To run our simulation, we used default parameters.

\para{QuiZX}
QuiZX~\cite{kissinger_simulating_2022} improves upon
\cite{bravyi_simulation_2019} by alternating between decomposing circuits
(splitting non-Clifford gates into Clifford gates) and optimizing the decomposed
circuits (which may further reduce non-Clifford gates).
We can use QuiZX to establish that a qubit is in state $\ket{0}$ by "plugging"
output $q$ as $\ket{1}$ and establishing that the probability of this output is
zero.~\footnote{The use of plugging is described on
\url{https://github.com/Quantomatic/quizx/issues/9}.}

\para{Feynman}
Feynman~\cite{amy_towards_2019} allows to verify quantum circuits based on the
Feynman path integral.
Its implementation\footnote{Tool available at
\url{https://github.com/meamy/feynman}} supports two main use cases, namely
optimization and checking the equivalence of two circuits.
While these use cases cannot prove that a circuit resets a qubit to $\ket{0}$,
we can use Feynman's equivalence check to check whether the circuits in
\cref{tab:benchmarks} are equivalent to a simplified version which
performs no operation at all on lower qubits.
We check this equivalence for all circuits, even for those where we know it does
not hold (namely all whose name ends with a prime), allowing us to  confirm that
Feynman cannot scale to any of our benchmarks (see
\cref{sec:eval-results}).

We note that Feynman currently does not support internal
measurements.~\footnote{\url{https://github.com/meamy/feynman/issues/8}}

\para{YP21}
Like \tool, \baseline~\cite{yu_quantum_2021} also uses abstract interpretation,
but relies on projectors instead of stabilizer simulation. Specifically, it
encodes the abstract state of selected (small) subsets of qubits as projectors
$\{P_j\}_{j \in \mathcal{J}}$, which constrain the state of these qubits to the
range of $P_j$.

To check if a qubit $q$ is in state $\ket{0}$, we check if the subspace
resulting from intersecting the range of all $P_j$ is a subset of the range of
$\I + \applyat[q]{Z}$---an operation which is natively supported by~\baseline.

When running \baseline, we used the two execution modes suggested in its
original evaluation~\cite{yu_quantum_2021}.
The first mode tracks the state of all pairs of qubits, while the second
considers subsets of $5$ qubits that satisfy a particular condition (for
details, see \cite[\secref{9}]{yu_quantum_2021}).
Because \cite{yu_quantum_2021} does not discuss which execution mode to pick for
new circuits, we evaluated all circuits in both modes.

We note that because \baseline does not support $CX(a,b)$ for $a>b$, we instead
encoded such gates as $H(a);H(b); CX(b,a); H(b); H(a)$.

\para{Statevector}
Qiskit~\cite{Qiskit} further provides a simulator based on state vectors, which
we also used for completeness.

\para{Abstraqt}
In \tool, we can establish that a qubit is in state $\ket{0}$ by measuring the
final abstract state $\abe{\rho}$ in basis $\applyat[i]{Z}$ and checking if the
probability of obtaining $\ket{1}$ is $0$.


\para{Experimental Setup}
We executed all experiments on a machine with $110$ GB RAM and $56$ cores
at $2.6$~GHz, running Ubuntu 22.04.
Because some tools consumed excessive amounts of memory, we limited them to $12$
GB of RAM.
This was not necessary for \tool, which never required more than $600$ MB of RAM.
We limited each tool to a single thread.

\subsection{Results} \label{sec:eval-results}

\cref{tab:results} summarizes the results when using all tools
discussed in \cref{sec:baselines} to establish that the last qubit in
$10$ randomly selected instantiations of each benchmark from
\cref{tab:benchmarks} is in state $\ket{0}$.
Overall, it demonstrates that while \tool can establish this for all benchmarks
within minutes, QuiZX can only establish it for a few instances, and all other
tools cannot establish it for any benchmark. Further, we found that for some
circuits the established simulation tool ESS yields incorrect results. We now
discuss the results of each tool in more details.

\para{MeasureGHZ}
Importantly, no baseline tool except \tool can simultaneously simulate both
outcomes of a measurement, without incurring an exponential blow-up. Therefore,
for \texttt{MeasureGHZ}, we consider internal measurements as an unsupported
operation in these tools.
We note that we could randomly select one measurement outcome and simulate the
remainder of the circuit for it, but then we can only establish that the final
state is $\ket{0}$ \emph{for a given sequence of measurement outcomes}. In
contrast, a single run of \tool can establish that the final state is $\ket{0}$
for all possible measurement outcomes (see also \cref{sec:measurement}).

\begin{table*}
	\caption{Success rates when running simulators on benchmarks from \cref{tab:benchmarks}.}
	\label{tab:results}
	\footnotesize
	\centering
	\resizebox{\linewidth}{!}{
	\setlength{\tabcolsep}{3pt}
	\begin{tabular}{@{}llllllll@{}}
		\textbf{Label} &
		\textbf{Abstraqt} &
		\textbf{QuiZX} &
		\textbf{ESS} &
		\textbf{Feynman} &
		\textbf{YP21 (mode 1)} &
		\textbf{YP21 (mode 2)} &
		\textbf{Statevec.}\\\hline
		\texttt{Cliff;Cliff} &
		$\textbf{100\%}$ & 
		$\phantom{00}0\%$ (E) & 
		$0\%$ (I) & 
		$0\%$ (T,M) & 
		$0\%$ (T,P) & 
		$0\%$ (I) & 
		$0\%$ (M) \\ 
		\texttt{Cliff+T;Cliff} &
		$\textbf{100\%}$ & 
		$\phantom{0}70\%$ (T) & 
		$0\%$ (I) & 
		$0\%$ (T,M) & 
		$0\%$ (T,P) & 
		$0\%$ (I) & 
		$0\%$ (M) \\ 
		\texttt{Cliff+T;CX+T} &
		$\textbf{100\%}$ & 
		$\phantom{0}80\%$ (M) & 
		$0\%$ (M) & 
		$0\%$ (T) & 
		$0\%$ (T,E,P) & 
		$0\%$ (I) & 
		$0\%$ (M) \\ 
		\texttt{Cliff+T;CX+T'} &
		$\textbf{100\%}$ & 
		$\phantom{00}0\%$ (M) & 
		$0\%$ (M) & 
		$0\%$ (T) & 
		$0\%$ (T,E,P) & 
		$0\%$ (I) & 
		$0\%$ (M) \\ 
		\texttt{Cliff+T;H+CZ+RX} &
		$\textbf{100\%}$ & 
		$\phantom{0}60\%$ (M) & 
		$0\%$ (M) & 
		$0\%$ (T) & 
		$0\%$ (T,P) & 
		$0\%$ (I) & 
		$0\%$ (M) \\ 
		\texttt{Cliff+T;H+CZ+RX'} &
		$\textbf{100\%}$ & 
		$\phantom{00}0\%$ (T,M) & 
		$0\%$ (M) & 
		$0\%$ (T) & 
		$0\%$ (T,P) & 
		$0\%$ (I) & 
		$0\%$ (M) \\ 
		\texttt{CCX+H;Cliff} &
		$\textbf{100\%}$ & 
		$\phantom{00}0\%$ (T) & 
		$0\%$ (M) & 
		$0\%$ (M) & 
		$0\%$ (T) & 
		$0\%$ (T) & 
		$0\%$ (M) \\ 
		\texttt{CCX+H;CX+T} &
		$\textbf{100\%}$ & 
		$\phantom{0}50\%$ (T) & 
		$0\%$ (M) & 
		$0\%$ (T) & 
		$0\%$ (T) & 
		$0\%$ (T) & 
		$0\%$ (M) \\ 
		\texttt{CCX+H;CX+T'} &
		$\textbf{100\%}$ & 
		$\phantom{00}0\%$ (T,M) & 
		$0\%$ (M) & 
		$0\%$ (T) & 
		$0\%$ (T) & 
		$0\%$ (T) & 
		$0\%$ (M) \\ 
		\texttt{RZ$_2$+H;CX} &
		$\textbf{100\%}$ & 
		$\phantom{00}0\%$ (E) & 
		$0\%$ (T,M) & 
		$0\%$ (U) & 
		$0\%$ (U) & 
		$0\%$ (U) & 
		$0\%$ (M) \\ 
		\texttt{RZ$_2$+H;CX'} &
		$\textbf{100\%}$ & 
		$\phantom{00}0\%$ (E) & 
		$0\%$ (M) & 
		$0\%$ (U) & 
		$0\%$ (U) & 
		$0\%$ (U) & 
		$0\%$ (M) \\ 
		\texttt{MeasureGHZ} &
		$\textbf{100\%}$ & 
		$\phantom{00}0\%$ (U) & 
		$0\%$ (U) & 
		$0\%$ (U) & 
		$0\%$ (U) & 
		$0\%$ (U) & 
		$0\%$ (U) \\ 
		\hline
		\textbf{Overall success} &
		$\textbf{100\%}$ & 
		$\phantom{0}22\%$ & 
		$0\%$ & 
		$0\%$ & 
		$0\%$ & 
		$0\%$ & 
		$0\%$ 
		\\[1em]
		\multicolumn{8}{@{}p{\linewidth}}{%
			T: timeout (6h), M: out of memory, U: unsupported operation in the circuit,
		} \\
		\multicolumn{8}{@{}p{\linewidth}}{%
			I: incorrect simulation results, P: too imprecise, E: internal error
		}
	\end{tabular}
	}
\end{table*}

\begin{table*}
	\caption{Detailed comparison of outcomes from \tool and QuiZX, including runtimes of successful runs.}
	\label{tab:quizx}
	\centering
	\resizebox{\linewidth}{!}{
	\begin{tabular}{%
		@{}%
		l%
		|%
		*{3}l%
		|%
		l@{\hspace{1pt}}%
		l@{\hspace{1pt}}%
		l*{2}%
		l@{}%
	}
		\footnotesize
		\textbf{Label} &
		\multicolumn{3}{@{}c|}{\textbf{\tool}} &
		\multicolumn{5}{@{}c}{\textbf{QuiZX}}
		\\
		& 
		Outcomes &
		min [s] &
		max [s] &
		\multicolumn{3}{@{}c}{Outcomes} &
		min [s] &
		max [s] \\
		\hline
		\texttt{Cliff;Cliff} &
		$10 \times \text{\xyes{}}$ &
		$\phantom{0}24$ &
		$\phantom{0}33$ &
		$0 \times \text{\xyes{}}$, & & $10 \times \text{E}$ &
		- &
		- \\
		\texttt{Cliff+T;Cliff} &
		$10 \times \text{\xyes{}}$ &
		$\phantom{0}32$ &
		$\phantom{0}47$ &
		$7 \times \text{\xyes{}}$, & $\phantom{0}3 \times \text{T}$ & &
		$5.5 \cdot 10^3$ &
		$2.0 \cdot 10^4$ \\
		\texttt{Cliff+T;CX+T} &
		$10 \times \text{\xyes{}}$ &
		$\phantom{0}46$ &
		$\phantom{0}63$ &
		$8 \times \text{\xyes{}}$, & $\phantom{0}2 \times \text{M}$ & &
		$2.0 \cdot 10^3$ &
		$9.4 \cdot 10^3$
		\\
		\texttt{Cliff+T;CX+T'} &
		$10 \times \text{\xyes{}}$ &
		$\phantom{0}47$ &
		$\phantom{0}65$ &
		$0 \times \text{\xyes{}}$, & $10 \times \text{T}$ & &
		- &
		-
		\\
		\texttt{Cliff+T;H+CZ+RX} &
		$10 \times \text{\xyes{}}$ &
		$\phantom{0}58$ &
		$\phantom{0}69$ &
		$6 \times \text{\xyes{}}$, & & $\phantom{0}4 \times \text{M}$ &
		$3.6 \cdot 10^3$ &
		$1.4 \cdot 10^4$
		\\
		\texttt{Cliff+T;H+CZ+RX'} &
		$10 \times \text{\xyes{}}$ &
		$\phantom{0}52$ &
		$\phantom{0}71$ &
		$0 \times \text{\xyes{}}$, & $\phantom{0}1 \times \text{T}$, & $\phantom{0}9 \times \text{M}$ &
		- &
		-
		\\
		\texttt{CCX+H;Cliff} &
		$10 \times \text{\xyes{}}$ &
		$143$ &
		$155$ &
		$0 \times \text{\xyes{}}$, & $10 \times \text{T}$ & &
		- &
		-
		\\
		\texttt{CCX+H;CX+T} &
		$10 \times \text{\xyes{}}$ &
		$155$ &
		$173$ &
		$5 \times \text{\xyes{}}$, & $\phantom{0}5 \times \text{T}$ & &
		$5.9 \cdot 10^3$ &
		$7.9 \cdot 10^3$
		\\
		\texttt{CCX+H;CX+T'} &
		$10 \times \text{\xyes{}}$ &
		$155$ &
		$173$ &
		$0 \times \text{\xyes{}}$, & $\phantom{0}1 \times \text{T}$, & $\phantom{0}9 \times \text{M}$ &
		- &
		-
		\\
		\texttt{RZ$_2$+H;CX} &
		$10 \times \text{\xyes{}}$ &
		$\phantom{0}37$ &
		$\phantom{0}47$ &
		$0 \times \text{\xyes{}}$, & & $10 \times \text{E}$ &
		- &
		-
		\\
		\texttt{RZ$_2$+H;CX'} &
		$10 \times \text{\xyes{}}$ &
		$\phantom{0}37$ &
		$\phantom{0}46$ &
		$0 \times \text{\xyes{}}$, & & $10 \times \text{E}$ &
		- &
		-
		\\
		\texttt{MeasureGHZ} &
		$10 \times \text{\xyes{}}$ &
		$\phantom{0}23$ &
		$\phantom{0}32$ &
		$0 \times \text{\xyes{}}$, & & $10 \times \text{U}$ &
		- &
		- \\[1em]
		\multicolumn{9}{@{}p{\linewidth}}{%
			T: timeout (6h), M: out of memory, U: unsupported operation in the circuit, E: internal error
		}
	\end{tabular}
	}
\end{table*}

\para{QuiZX}
As QuiZX is the only baseline tool solving some of our benchmark instances, we provide a detailed comparison to it in \cref{tab:quizx}.

Overall, QuiZX cannot consistently handle any of the benchmarks from
\cref{tab:benchmarks}, 
Instead, it often either times out or runs out of memory.
Further, QuiZX consistently runs into an internal error when simulating
\texttt{RZ$_2$+H;CX} and \texttt{RZ$_2$+H;CX'}.
Surprisingly, QuiZX also consistently fails to simulate \texttt{Cliff;Cliff},
which we conjecture is due to a bug for circuits that do not contain
non-Clifford gates. After adding a single $T$ gate, simulation is successful.

Importantly, even when QuiZX succeeds, it is significantly slower than \tool,
sometimes by more than two orders of magnitude.

\para{ESS}
Surprisingly, ESS simulates circuits \texttt{Cliff;Cliff} and
\texttt{Cliff+T;Cliff} incorrectly. Specifically, it samples the impossible
measurement of $1$ around 50\% of cases. Interestingly, smaller circuits
generated with the same process are handled correctly. It is reassuring to see
that \tool allows us to discover such instabilities in established tools.

It may be surprising that ESS returns an incorrect result for
\texttt{Cliff+T;Cliff} instead of timing out, although the circuit contains many
$T$ gates---this is because Qiskit can establish that the Clifford+T part of the
circuit is irrelevant when measuring the last qubit. For all remaining circuits,
ESS runs out of memory or times out, as it decomposes the circuit into
exponentially many Clifford circuits.

\para{Feynman}
Feynman consistently either times out, runs out of memory, or does not support a
relevant operation (namely measurement and $RZ_2$).

\para{\baseline}
\baseline typically either times out, throws an internal error, does not support
a relevant operation (e.g., measurements or $RZ_2$), or returns incorrect
results. The latter is because on some circuits, mode 2 choses an empty set of
projectors, which leads to trivially unsound results. When \baseline does
terminate, it is too imprecise to establish that the last qubit is in state
$\ket{0}$.

\para{Statevector}
Unsurprisingly, statevector simulation cannot handle the circuits in
\cref{tab:results}. This is because it requires space exponential in the number
of qubits, which precludes simulating any of the benchmarks.

\subsection{Limitations and Discussion} \label{sec:limitations}
We note that our benchmarks are designed to showcase successful applications of
\tool where it outperforms existing tools. Of course, \tool is not precise on
all circuits---e.g., \tool quickly loses precision on general Clifford+T
circuits (analogously to the imprecise measurement discussed in
\cref{sec:overview}).

\para{Future Abstractions}
We expect that for many real-world circuits, existing approaches work better
than the current implementation of \tool. However, as \tool only abstracts the
first stabilizer simulation generalized to non-Clifford
gates~\cite[\secref{VII-C}]{aaronson_improved_2004}, we believe it paves the way
to also abstract more recent stabilizer simulators.
For example, ESS~\cite{bravyi_simulation_2019} operates on so-called
\emph{CH-forms} which, like the generalized stabilizer simulation underlying
\tool, can be encoded using bits and complex numbers. Hence, it seems plausible
that our ideas could be adapted to abstract ESS. QuiZX operates on
\emph{ZX-diagrams} consisting of graphs whose nodes are parametrized by rotation
angles $\alpha$. Again, a promising direction for future research is introducing
abstract ZX-diagrams that support abstract rotation angles.
This is particularly promising because both ESS and QuiZX scale better
in number of $T$ gates than \cite[\secref{VII-C}]{aaronson_improved_2004}: with
$2^n$ instead of $4^n$.

We note however that not all concrete simulation techniques are directly
amenable to abstraction. For example, when naively abstracting the Clifford
simulation by Aaronson and Gottesmann, applying a measurement requires selecting
an entry in an boolean matrix that definitively equals one~\cite[Case I in
\secref{III}]{aaronson_improved_2004}---it is unclear how to generalize this to
abstract boolean matrices whose entries may be $\{0,1\}$.

\para{Improving \tool}
Another promising route towards better abstractions in incrementally improving
\tool itself. For example, it would be interesting to consider the effect of
keeping more than one abstract summand, abstracting $P_i$ or $b_{ij}$ using a
custom relational domain (which retains information about the relationship
between different values)~\cite{mine_weakly_2004}, or a more precise abstraction
for complex numbers by taking into account that restricted gate sets such as
Clifford+T only induce matrices over finite sets of values.

\para{Summary}
Overall, we believe that all tools in \cref{tab:results} are valuable
to analyze quantum circuits.
We are hoping that addressing some limitations of the considered baselines
(e.g., fixing bugs in QuiZX and ESS) and cross-pollinating ideas (e.g.,
extending QuiZX by abstract interpretation) will allow the community to benefit
from the fundamentally different mathematical foundations of all tools.

\section{Related Work} \label{sec:related}
Here, we discuss works related to the goal and methods of \tool.

\para{Quantum Abstract Interpretation}
Some existing works have investigated abstract interpretation for simulating
quantum
circuits~\cite{yu_quantum_2021,perdrix_quantum_2008,honda_analysis_2015}.
As \cite{yu_quantum_2021} is not specialized for Clifford circuits, it is very
imprecise on the circuits investigated in \cref{sec:eval}: it cannot derive that
the lower qubits are $\ket{0}$ for any of them.
While \cite{perdrix_quantum_2008,honda_analysis_2015} are inspired by stabilizer
simulation, they only focus on determining if certain qubits are entangled or
not, whereas \tool can extract more precise information about the state.
Further, both tools are inherently imprecise on non-Clifford gates---in
contrast, a straight-forward extension of \tool can treat some non-Clifford
gates precisely at the exponential cost of not merging summands.

\para{Stabilizer Simulation}
The Gottesman-Knill theorem~\cite{gottesmanHeisenberg1998} established that
stabilizers can be used to efficiently simulate Clifford circuits.
Stim~\cite{gidney_stim_2021} is a recent implementation of such a simulator,
which only supports Clifford gates and Pauli measurements.

Stabilizer simulation was extended to allow for non-Clifford gates at an
exponential cost, while still allowing efficient simulation of Clifford
gates~\cite[\secref{VII-C}]{aaronson_improved_2004}.
Various works build upon this insight, handling Clifford gates efficiently but
suffering from an exponential blow-up on non-Clifford
gates~\cite{gottesmantypes,kissinger_simulating_2022,bravyi_simulation_2019,pashayan_fast_2022,kissinger_classical_2022}.
In our evaluation, we demonstrate that \tool extends the reach of
state-of-the-art stabilizer simulation by comparing to two tools from this
category, ESS~\cite{bravyi_simulation_2019} (chosen because it is implemented in
the popular Qiskit library) and QuiZX~\cite{kissinger_simulating_2022} (chosen
because it is a recent tool reporting favorable runtimes). 

\para{Verifying Quantum Programs}
Another approach to establishing circuit properties is end-to-end formal program
verification, as developed in \cite{hietala-verif} for instance. However, this
approach often requires new insights for each program it is applied to. Even
though recent works have greatly improved verification automation, proving even
the simplest programs still requires a significant time
investment~\cite{chareton2020deductive}, whereas our approach can analyze it
without any human time investment.

The work~\cite{invariants_expo_2017} automatically generates rich invariants,
but is exponential in the number of qubits, limiting its use to small circuits.
Finally, \cite{amy_towards_2019} can automatically verify the equivalence of two
given circuits, but times out on the benchmarks considered in
\cref{sec:eval}.
\section{Conclusion} \label{sec:conclusion}
In this work, we have demonstrated that combining abstract interpretation with
stabilizer simulation allows to establish circuit properties that are
intractable otherwise.

Our key idea was to over-approximate the behavior of non-Clifford gates in the
generalized stabilizer simulation of Aaronson and
Gottesman~\cite{aaronson_improved_2004} by merging summands in the sum
representation of the quantum states density matrix.
Our carefully chosen abstract domain allows us to define efficient abstract transformers that approximate each of the concrete stabilizer simulation functions, including measurement.

\message{^^JLASTBODYPAGE \thepage^^J}

\bibliography{references}

\message{^^JLASTREFERENCESPAGE \thepage^^J}


\ifincludeappendixx
	\clearpage
	\appendix
	\section{Abstract Transformers Soundness}\label{app:soundness}
Here, we prove the soundness of the trace transformer \cref{eq:def-abstract-trace}:
\begin{theorem}{Trace.}
	For all $\abe{\rho} \in \boldsymbol{\mathbb{D}}$ we have 
	$$\gamma \circ \ttrace{\abe{\rho}} \supseteq \trace \circ \gamma(\abe{\rho}).$$
\end{theorem}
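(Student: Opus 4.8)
\emph{Overall approach.} Unfolding the two concretizations, the claim is equivalent to: for every concrete density matrix $\rho \in \gamma(\abe{\rho})$ we have $\trace(\rho) \in \gamma(\ttrace{\abe{\rho}})$. So I would fix an arbitrary $\rho = \sum_{i=1}^{r} c_i P_i \prod_{j=1}^{n} \tfrac{\I + (-1)^{b_{ij}} Q_j}{2} \in \gamma(\abe{\rho})$, i.e.\ with $c_i \in \gamma(\abe{c})$, $P_i \in \gamma(\abe{P})$, $b_{ij} \in \gamma(\abe{b_j})$, and write $b_i := (b_{i1},\dots,b_{in})$. By the concrete trace formula \cref{eq:trace-concrete}, $\trace(\rho) = \sum_{i=1}^{r} \Re\!\big(c_i\, \i^{\grppref(P_i,Q,b_i)}\big)$ (using $\i^{\lightning} := 0$). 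It therefore suffices to establish (a) that each summand $\Re\!\big(c_i\,\i^{\grppref(P_i,Q,b_i)}\big)$ lies in the interval $I := \gamma\!\big(\Re(\abe{c}\,\i^{\grppref(\abe{P},Q,\abe{b})})\big)$, and (b) that any sum of $r$ elements of $I$ lies in $r \cdot I = \gamma(\ttrace{\abe{\rho}})$. Part (b) is immediate, since $r \in \mathbb{N}$ and scaling an interval by a nonnegative integer agrees with $r$-fold interval addition.

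\emph{The key lemma.} The heart of (a) — and the main obstacle — is soundness of the abstract $\grppref$ of \cref{eq:f-abstract}: whenever $P_i \in \gamma(\abe{P})$ and $b_{ij} \in \gamma(\abe{b_j})$, then $\grppref(P_i,Q,b_i) \in \gamma\!\big(\grppref(\abe{P},Q,\abe{b})\big)$. I would prove this by following the three branches of \cref{eq:f-abstract}. If $\commut{P_i}{Q_j} = 1$ for some $j$, then by soundness of the abstract commutator \cref{eq:paulicomm} we have $1 \in \commut{\abe{P}}{Q_j}$, so we are in branch~1 or branch~3, each of whose value contains $\lightning = \grppref(P_i,Q,b_i)$. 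Otherwise $P_i$ commutes with every $Q_j$, so the concrete system $\encpauli(P_i) = \encpauli(Q)\,x$ has a unique solution $x$, and the crux is to show $x \in \gamma(\abe{x})$ for the abstract solution set $\abe{x}$ of \cref{eq:abstract-solve}. Here I would argue: extending $\encpauli$ to $\abe{P}$ by joining images keeps $\encpauli(P_i) \in \gamma(\encpauli(\abe{P}))$ componentwise; discarding the rows whose left-hand side is $\{0,1\}$ only enlarges the solution set (so is sound) and leaves a fully concrete system satisfied by every concrete $P_i$; and the affine parametrisation $y + \sum_k \lambda_k u_k$ of that reduced system's solution space covers $x$. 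Given $x \in \gamma(\abe{x})$, the remaining arithmetic in \cref{eq:f-abstract} — instantiating $Q_j^{x_j}$ within $Q_j^{\abe{b_j}}$ via the stated lifted exponentiation, the Pauli product with its prefactor \cref{eq:prefactor-of-product}, and the $\azz$ additions, subtractions and the doubling — is sound by the transformers already established in \cref{sec:abstractElements}. Composing these soundness facts yields $\grppref(P_i,Q,b_i) \in \gamma(\grppref(\abe{P},Q,\abe{b}))$.

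\emph{Composing the remaining transformers.} Granting the key lemma, part (a) follows because $\Re(\abe{c}\,\i^{\grppref(\abe{P},Q,\abe{b})})$ is a composition of transformers each already shown sound: the exponentiation $\i^{(\cdot)}$ of \cref{eq:i-exp}, where the convention $\i^{\lightning} := 0 = e^{-\infty + 0\i}$ ensures a possible $\lightning \in \grppref(\abe{P},Q,\abe{b})$ just contributes $0$ to the join and cannot break soundness; the complex multiplication \cref{eq:complexmult}; and the real-part transformer \cref{eq:complexsum}. Since a composition of sound abstract transformers is sound (as noted in \cref{sec:background}), any value $\Re(c_i\,\i^{\grppref(P_i,Q,b_i)})$ obtained from $c_i \in \gamma(\abe{c})$ and $\grppref(P_i,Q,b_i) \in \gamma(\grppref(\abe{P},Q,\abe{b}))$ lies in $I$. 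Combining (a) with (b) gives $\trace(\rho) = \sum_{i=1}^{r} \Re(c_i\,\i^{\grppref(P_i,Q,b_i)}) \in r \cdot I = \gamma(\ttrace{\abe{\rho}})$, completing the proof.

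\emph{Where the difficulty lies.} Everything except the key lemma is routine bookkeeping over transformers whose soundness is established earlier in the paper. The genuine work is the soundness of the abstract $\grppref$, and within it the Gaussian-elimination step \cref{eq:abstract-solve}: one must carefully verify that dropping the $\{0,1\}$-valued equations, together with the resulting affine parametrisation, over-approximates the set of all concrete solutions $x$ ranging over every admissible $P_i \in \gamma(\abe{P})$ — including the boundary case where the reduced concrete system is itself inconsistent, so that no concrete $P_i$ commutes with all $Q_j$ and every concrete $\grppref$ value equals $\lightning$.
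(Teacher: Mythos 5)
Your proof is correct and follows essentially the same route as the paper's: unfold $\gamma(\abe{\rho})$, apply the concrete trace formula \cref{eq:trace-concrete}, reduce everything to soundness of the composed abstract transformers (in particular the abstract $\grppref$ of \cref{eq:f-abstract}), and finish with the fact that $r$-fold interval addition coincides with scaling by $r$ (the paper's ``property of intervals''). If anything, you elaborate the soundness of the abstract $\grppref$ and the Gaussian-elimination step \cref{eq:abstract-solve} in more detail than the paper, which dispatches that point in a single introductory remark.
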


\begin{proof}
	The over-approximation $\af{\grppref}$ follows closely the form of $\grppref$, where the first term $\pref(\abe{P})$ over-approximates the prefactors of $\abe{P}$ and second term over-approximates the prefactors originating from the solution space for $y$ of $\bare(\abe{P}) =\bare(\prod_{j = 1}^n Q_j^{y_j})$. 
	Overall, we have:
	\begin{small}
	\begin{align*}
		\trace \circ \gamma \left(\abe{\rho}\right) 
		&= 
		\trace \left(
			\left\{
			\sum_{i=1}^r
				c_i P_i \prod_{j=1}^n \twofrac
				\left(
					\I + (-1)^{b_{ij}} Q_j
				\right) \; \middle| \;
				c_i \in \abe{c}, P_i \in \abe{P}, b_{ij} \in \abe{b}_j
			\right\}
		\right)
		\\
		&= 
		\left\{
			\trace \left(
			\sum_{i=1}^r
				c_i P_i \prod_{j=1}^n \twofrac
				\left(
					\I + (-1)^{b_{ij}} Q_j
				\right)
				\right)
				\; \middle| \;
				c_i \in \abe{c}, P_i \in \abe{P}, b_{ij} \in \abe{b}_j
		\right\}
		\\
		&= 
		\left\{
			\sum_{i=1}^r
				\Re \left( 
					c_i \i^{\grppref(P, Q, b_{ij})}
				\right)
			\; \middle| \;
			c_i \in \abe{c}, P_i \in \abe{P}, b_{ij} \in \abe{b}_j
		\right\} & \text{Concrete trace, \cref{sec:trace}}
		\\
		&= 
			\sum_{i=1}^r
				\Re \left( 
					\{c_i \in \abe{c}\} \cdot \i^{\grppref(\{P_i \in \abe{P}\}, Q, \{b_{ij} \in \abe{b}_j\})}
				\right)
		\\
		&\subseteq \gamma 
			\left(
			\sum_{i=1}^r
				\Re \left( 
					\abe{c} \cdot \i^{\grppref^\sharp(\abe{P}, Q, \abe{b}_j)}
				\right)
			\right) & \text{Soundness of transf.}
		\\
		&= \gamma 
			\left(
			r \cdot
				\Re \left( 
					\abe{c} \cdot \i^{\grppref^\sharp(\abe{P}, Q, \abe{b}_j)}
				\right)
			\right) & \text{Property of intervals}
		\\
		&= \gamma \circ 
			\trace \left( \abe{\rho} \right). 
	\end{align*}
	\end{small}
\end{proof}

\newpage

\section{Stabilizers and Pauli Matrices} \label{app:stabilizers}

\cref{tab:stabilizers} shows the states stabilized by each Pauli matrix,
together with the density matrix of the stabilized state.
Further, \cref{tab:multiply-pauli} shows the multiplication table for the Pauli
matrices.

\begin{table}
	\caption[States stabilized by Pauli matrices]{States stabilized by Pauli
	matrices $\pauli$ and also $-\pauli$, where $\X := \Xmatrix$, $\Y :=
	\Ymatrix$, $\Z := \Zmatrix$, and $\I[2] := \Imatrix$.}
	\label{tab:stabilizers}
	\small
	\centering
	\newcommand{\myfrac}[2]{\tfrac{#1}{#2}}
	\setlength{\tabcolsep}{1pt}
	\def\arraystretch{1.4}
	\begin{tabular}{@{}ccc@{\hskip 2.5pt}|@{\hskip 2.5pt}ccc@{}}
		\toprule
		\textbf{Stab.} & \textbf{State vec.} & \textbf{Dens. mat.} &
		\textbf{Stab.} & \textbf{State vec.} & \textbf{Dens. mat.} \\
		\midrule
		$\X$ &
		$\myfrac{1}{\sqrt{2}}\begin{psmallmatrix}
			1 \\
			1
		\end{psmallmatrix} \equalhat \ket{+}$ &
		$\myfrac{1}{2}\begin{psmallmatrix}
			\phantom{-}1 & 1 \\
			\phantom{-}1 & 1
		\end{psmallmatrix}$ & 
		$-\X$ &
		$\myfrac{1}{\sqrt{2}}\begin{psmallmatrix}
			\phantom{-}1 \\
			-1
		\end{psmallmatrix} \equalhat \ket{-}$ &
		$\myfrac{1}{2} \begin{psmallmatrix}
			\phantom{-}1 & -1 \\
			-1 & \phantom{-}1
		\end{psmallmatrix}$ \\
		$\Y$ &
		$\myfrac{1}{\sqrt{2}}\begin{psmallmatrix}
			1 \\
			\i
		\end{psmallmatrix} \phantom{\hspace{3pt}\equalhat \ket{+}}$ &
		$\myfrac{1}{2}\begin{psmallmatrix}
			\phantom{-}1 & \i \\
			-\i & 1
		\end{psmallmatrix}$ & 
		$-\Y$ &
		$\myfrac{1}{\sqrt{2}}\begin{psmallmatrix}
			\phantom{-}1 \\
			-\i
		\end{psmallmatrix} \phantom{\hspace{3pt}\equalhat \ket{+}}$ &
		$\myfrac{1}{2}\begin{psmallmatrix}
			\phantom{-}1 & -\i \\
			\phantom{-}\i & \phantom{-}1
		\end{psmallmatrix}$ \\
		$\Z$ &
		$\phantom{\myfrac{1}{\sqrt{2}}}\hspace{-2pt}\begin{psmallmatrix}
			1 \\
			0
		\end{psmallmatrix}
		\equalhat \ket{0}$ &
		$\phantom{\myfrac{1}{2}}\begin{psmallmatrix}
			\phantom{-}1 & 0 \\
			\phantom{-}0 & 0
		\end{psmallmatrix}$ & 
		$-\Z$ &
		$\phantom{\myfrac{1}{\sqrt{2}}}\hspace{-2pt}\begin{psmallmatrix}
			\phantom{-}0 \\
			\phantom{-}1
		\end{psmallmatrix}
		\equalhat \ket{1}$ &
		$\phantom{\myfrac{1}{2}}\begin{psmallmatrix}
			\phantom{-}0 & \phantom{-}0 \\
			\phantom{-}0 & \phantom{-}1
		\end{psmallmatrix}$ \\
		$\I[2]$ &
		(any vec.) &
		- &
		$-\I[2]$ &
		(no vec.) &
		- \\
		\bottomrule
	\end{tabular}
\end{table}

\begin{table}
	\caption{Multiplication of Pauli matrices.}
	\label{tab:multiply-pauli}
	\centering
	\small
	\setlength{\tabcolsep}{1.5pt}
	\begin{tabular}
		{
			@{}
			*{3}
			{
				ccr@{\hspace{20pt}}
			}
			ccr@{}
		}
		\toprule
		$\I \I$ & $=$ & $    \I$ &
		$\I \X$ & $=$ & $    \X$ &
		$\I \Y$ & $=$ & $    \Y$ &
		$\I \Z$ & $=$ & $    \Z$
		\\
		$\X \I$ & $=$ & $    \X$ &
		$\X \X$ & $=$ & $    \I$ &
		$\X \Y$ & $=$ & $ \i \Z$ &
		$\X \Z$ & $=$ & $-\i \Y$ \\
		$\Y \I$ & $=$ & $    \Y$ &
		$\Y \X$ & $=$ & $-\i \Z$ &
		$\Y \Y$ & $=$ & $    \I$ &
		$\Y \Z$ & $=$ & $ \i \X$ \\
		$\Z \I$ & $=$ & $    \Z$ &
		$\Z \X$ & $=$ & $ \i \Y$ &
		$\Z \Y$ & $=$ & $-\i \X$ &
		$\Z \Z$ & $=$ & $    \I$ \\
		\bottomrule
	\end{tabular}
\end{table}

\fi

\message{^^JLASTPAGE \thepage^^J}

\end{document}